\newlist{Steps}{enumerate}{2}
\setlist[Steps]{label=Phase \arabic*., font=\textbf, itemindent=*}
\newtheorem{thm}    {Theorem}
\newtheorem{lem}     {Lemma}
\newtheorem{proposition}        {Proposition}
\newtheorem{define}     {Definition}
\newtheorem{example}  {Example}
\def\Tr{\mathop{\rm Tr}\nolimits}
\def\argmax{\mathop{\rm argmax}}
\def\bW{\bm{W}}
\def\bP{\bm{P}}
\def\argmax{\mathop{\rm argmax}}
\def\uni{\mathop{\rm uni}}
\def\argmin{\mathop{\rm argmin}}
\newcommand{\bF}{\mathbb{F}}
\def\Pr{{\rm Pr}}
\newcommand{\qed}{\hfill \IEEEQED}
\newcommand{\eps}{\varepsilon}
\newcommand{\sX}{\mathsf{X}}
\newcommand{\sZ}{\mathsf{Z}}
\newcommand{\sM}{\mathsf{M}}
\newcommand{\sL}{\mathsf{L}}
\def\QED{\mbox{\rule[0pt]{1.5ex}{1.5ex}}}
\def\endproof{\hspace*{\fill}~\QED\par\endtrivlist\unskip}
 \newenvironment{proofof}[1]{\vspace*{5mm} \par \noindent
         \quad{\it Proof of #1:\hspace{2mm}}}{\qed
}
\def\Label#1{\label{#1}\ [\ #1\ ]\ }
\def\Label{\label}
\begin{document}
\title{Commitment capacity of classical-quantum channels}
\author{
Masahito~Hayashi~\IEEEmembership{Fellow,~IEEE,}\thanks{Masahito Hayashi is with 
Shenzhen Institute for Quantum Science and Engineering, Southern University of Science and Technology,
Nanshan District, Shenzhen, 518055, China,
International Quantum Academy (SIQA), Futian District, Shenzhen 518048, China,
Guangdong Provincial Key Laboratory of Quantum Science and Engineering,
Southern University of Science and Technology, Nanshan District, Shenzhen 518055, China,
and
Graduate School of Mathematics, Nagoya University, Nagoya, 464-8602, Japan.
(e-mail:hayashi@sustech.edu.cn, masahito@math.nagoya-u.ac.jp)}
Naqueeb Ahmad Warsi \thanks{Naqueeb Ahmad Warsi is working as an Assistant Professor at the Indian Statistical Institute, Kolkata, 700108, India (email: naqueebwarsi@isical.ac.in)}}
\date{}
\maketitle
\begin{abstract}
We study commitment scheme for classical-quantum channels. To accomplish this we define various notions of commitment capacity for these channels and prove matching upper and lower bound on it in terms of the conditional entropy. Our achievability (lower bound) proof is quantum generalisation of the work of one of the authors (arXiv:2103.11548) which studied the problem of secure list decoding and its application to bit-string commitment. The techniques we use in the proof of converse (upper bound) is similar in spirit to the techniques introduced by Winter, Nascimento and Imai (Cryptography and Coding 2003) to prove upper bound on the commitment capacity of classical channels. However, generalisation of this technique to the quantum case is not so straightforward and requires some new constructions, which can be of independent interest. 
\end{abstract}
\section {Introduction}
Most of the modern protocols which are used to securely encrypt a message are based on the notion of commitment. Commitment with respect to this secure encryption means that one of the party (Alice) involved in the protocol is able to choose a message from a set and be committed to her choice. Her commitment to this choice of message should be in such a way that while revealing this choice of message to the other party (Bob), she should not be able to reveal something to which she didn't choose and commit. To understand this intuitively, we consider the following example:  

\begin{enumerate}
\item  Alice wants to commit a message $m$ chosen from a finite set. She does this by writing the message on a paper and then locking it inside an envelope. 

\item Alice then gives the locked envelope to Bob. At a later point of time, when Bob wants to read the message $m$, 
he asks for the key from Alice so that he can open the envelop and read the message. 
\end{enumerate}

The procedure discussed in the above example needs to satisfy the following two properties: 

\begin{enumerate}[label=\roman*]
\item {\bf{Concealing}}: After receiving the locked envelope from Alice, Bob should have no idea about what is written on the paper locked inside the envelope until Alice reveals him the key to open the envelope and read the message.
\item {\bf {Binding}}: After locking the message in the envelope, 
Alice should not be able to change it after she hands over the locked envelope to Bob. 
\end{enumerate} 
 The commitment scheme tries to solve this problem without the trusted third party (locked envelope). This problem was first introduced and studied by Blum \cite{Blum-82}. However, \cite{IJL-99} and \cite {Mayers-97} showed that if there are no computational constraints on the sender and receiver then bit commitment is not possible. 
{Cr\'{e}peau \cite{CC} pointed out that
bit commitment can be realized when a binary noisy channel is available.
That is, a noisy channel (modeled as $p_{Y\mid X}$) can help in achieving commitment scheme. 
Studying this problem from information theory perspective,
Winter et al. \cite{BC1,CCDM} }gave a probabilistic definition of commitment and {}{used} information theoretic notion for secrecy (concealing). 
Using these {}{tools,}  they {}{defined} the commitment capacity of a channel and {}{showed} 
that it is equal to $\max_{p_X}H(X \mid Y)$. 
Although their direct part is sound, {}{they wrote only the sketch of the converse part.}
{}{In addition, their converse proof contains an analysis that cannot be extended to the quantum setting,
as explained later.}
{}{The reference \cite{W-Protocols} discussed the same issue in a similar way.
Later, Yamamoto et al. \cite{BC3} studied this problem
under the problem setting of multiplex coding.
Although this paper also considers the converse part as well as the direct part, their converse part is a weaker statement than the converse part of the original problem setting, as explained later.}

{}{Further,} the papers \cite{BC1,CCDM} also introduced the commitment scheme when the parties have access to a classical-quantum channel 
{}{(cq-channel)}. 
Even though they claimed that the commitment capacity 
{}{with cq-channel}
is in terms of conditional quantum entropy, 
they didn't provide the complete proof. 
In fact, this generalization is not so straight forward. 
Also, they didn't explore the possibility that 
the parties involved have more options in terms of cheating the other party
{}{in the quantum case}. 

Recently, the reference \cite{H2021} pointed out that 
a code to achieve the commitment capacity can be constructed by using a special type of list decoding. 
{}{Originally}, list decoding was proposed by 
Elias \cite{Elias} and Wozencraft \cite{Wo} independently.
Hamming distance takes a important role in the code construction 
by \cite{H2021} similar to the preceding studies \cite{BC1,CCDM,W-Protocols,BC3}.
 
We explore all these issues in this manuscript. 
In particular, we generalize the notion of interactive protocol for implementing commitment scheme introduced in \cite{BC1,CCDM} to the case of classical-quantum channel $\bW_{X \to Y}$ (in our future discussions we will omit the subscript $X \to Y.$) 
{Towards this aim,
we define the notion of active and passive attacks. 
Using these two notions, we give two types of definitions for the commitment capacity of a classical-quantum channel and denote them as $C_a(\bW)$ and $C_p(\bW)$, respectively.} 
In this manuscript, we observe that finding $C_a(\bW)$ 
and $C_p(\bW)$ 
is a difficult problem. Therefore, we study a simpler version of the interactive protocol. In this simpler version, we restrict Alice and Bob to only use invertible operations to accomplish the commitment scheme. 
Therefore, to study this special case, we introduce $C_{a, inv}(\bW)$ and $C_{p, inv}(\bW)$ which represent the commitment capacity of a classical-quantum channel when the parties are allowed to use only invertible operations.  
Further, we also explore the case when Alice and Bob implement commitment scheme by only using non-interactive protocol over a classical-quantum channel. {We define $C_{a, non}(\bW), C_{p, non}(\bW)$
as the capacities under this setting.}

We show a relation between these notions of the commitment capacity of classical-quantum channel defined in this manuscript. In particular,  the following relationship is one of the main result of this manuscript:
  \begin{align}
  \label{main}
 C_{a, non}(\bW)=C_{p, non}(\bW)= 
  C_{a, inv}(\bW)=C_{p, inv}(\bW)= \sup_{P \in \bP({\cal X}) } H(X|Y)_{P}.
 \end{align}
We obtain \eqref{main} by first showing that $C_{p, inv}(\bW) \le \sup_{P \in \bP({\cal X}) } H(X|Y)_{P}$. 
This is the converse part and it requires \emph{construction} of some functions which helps in proving 
{the converse part} by using the Fano's inequality. 
However, as explained in {}{Subsection \ref{rem1},
the references \cite{BC1,CCDM,W-Protocols} 
have a problem in the construction of the above type of function.}
This paper concretely writes down the construction of such a function 
from a general interactive quantum protocol as Proposition \ref{Fanos}
when the protocol satisfies the invertible condition.
Since any interactive protocol in the classical setting satisfies the invertible condition, our converse proof covers the classical setting without any condition.
In addition, since the reference \cite{BC3} considered the converse part only
for non-interactive protocols, it did not discuss the above type of function.

{To show the direct part, 
we prove that $ C_{a, non}(\bW) \ge \sup_{P \in \bP({\cal X}) } H(X|Y)_{P}$ by showing the existence of a non-interactive protocol which satisfies the binding and concealing property even for a cq-channel.}
While our protocol construction is quite similar to the protocol proposed by 
 the reference \cite{H2021},
our code construction is different from that by \cite{H2021} in the following point. 
The reference \cite{H2021} considered only the classical channel, and
introduced the special class of list decoding, so called secure list decoding.
Then, the reference \cite{H2021} converts {}{secure} list decoding to a non-interactive protocol. 
In this conversion, Bob applies the list decoder and gets the list of messages
in the commitment phase. 
Bob checks whether the information revealed by Alice is contained in the list in the reveal phase.
However, in the case with cq-channel, it is not so easy to construct 
the list decoder due to the non-commutativity of the density operators.
Therefore, in this paper, we construct Alice's encoder of the commitment phase in the same way as the paper \cite{H2021}.
In our constructed non-interactive protocol,
Bob does nothing in the commitment phase.
In the reveal phase, 
he applies the projection corresponding to the information
revealed by Alice to check whether Alice is honest or not.

The rest of the manuscript is as follows. Section \ref{prep} prepares
notations and the definitions used in this manuscript. 
Section \ref{formulation} mathematically formulates the commitment scheme and gives a formal mathematical definition for it. 
We also define several notions of the commitment capacity in this section. 
Section \ref{symmetric} applies our result to the case when the channel has symmetry.
Section \ref{converse} proves the converse for commitment scheme when Alice and Bob are only using reversible operations to accomplish the commitment scheme. 
Section \ref{achievable} gives a protocol when Alice and Bob are only allowed to use non-interactive protocol for accomplishing commitment scheme. The protocol is given by a conversion from a specific type of code,
and 
Section \ref{S6} is devoted to its construction.
Section \ref{S7} makes conclusion and discusses future studies.

\section{Preparation}
\label{prep}
\subsection{Notations and Information quantities}
This paper focuses on a noisy classical-quantum (cq-) channel $\bW=\{W_x\}_{x \in {\cal X}}$
from an 
input classical system ${\cal X}$ composed of finite elements
to a quantum system ${\cal H}_Y$,
where $W_x$ is the density operator on the output quantum system ${\cal H}_Y$ with input $x
\in {\cal X}$.
Also, we define the density operator $W_{P}$ on $Y$
as $W_P:= \sum_{x\in {\cal X}}P(x)W_x$.
Then, the joint cq-state $\bW\times P$ is defined as
\begin{align}
\bW\times P= \sum_{x \in {\cal X}}P(x)|x\rangle \langle x| \otimes W_x .
\end{align}
We denote the set of probability distributions on 
${\cal X}$ and the set of density operators on ${\cal H}_Y$ 
 by $\bP({\cal X})$ and ${\cal S}({\cal H}_Y)$, respectively.

$D(\rho\|\sigma)$ is the relative entropy between two density operators $\rho$ and $\sigma$, which is defined as
\begin{align}
D(\rho\|\sigma):= \Tr \rho (\log \rho-\log \sigma).
\end{align}
The sandwitched realtive entropy $\tilde{D}_\alpha(\rho\|\sigma)$ is defined as
\begin{align}
\tilde{D}_\alpha(\rho\|\sigma):= \frac{1}{\alpha-1}\log \Tr 
(\sigma^{-\frac{\alpha-1}{2\alpha}}\rho \sigma^{-\frac{\alpha-1}{2\alpha}})^\alpha.
\end{align}
Given a state $\rho_{XY}$ on $XY$,
we consider various information quantities like mutual information and conditional entropy.
When we need to clarify the state on the quantum system 
for these quantities, we add the symbol like $[\rho]$ after the information quantity. 
For example, the entropy is defined as 
$H(XY)[\rho_{XY}]:=-\Tr \rho_{XY} \log \rho_{XY}$.
Various type of conditional entropies are defined as
\begin{align}
{H}(X|Y)[\rho_{XY}]& := H(XY)[\rho_{XY}]- H(Y)[\rho_{XY}]\\
\tilde{H}_\alpha(X|Y)[\rho_{XY}]&:= \max_{\sigma\in {\cal S}({\cal H}_Y)}
-\tilde{D}_\alpha(\rho_{XY}\|  I_X \otimes \sigma).
\end{align}
Various type of mutual informations 
are defined as
\begin{align}
I(X;Y)[\rho_{XY}]& := H(XY)[\rho_{XY}]- H(X)[\rho_{XY}]- H(Y)[\rho_{XY}]\\
\tilde{I}_\alpha(X|Y)[\rho_{XY}]&:= \max_{\sigma\in {\cal S}({\cal H}_Y)}
-\tilde{D}_\alpha(\rho_{XY}\|  \rho_X \otimes \sigma).
\end{align}

Now, we consider the case when the joint state $\rho_{XY}$ is given as
$\bW\times P$
by using a distribution $P$ on ${\cal X}$.
Under the state $\bW\times P$, we change the symbol added to various information quantities, 
$[\bW\times P]$
to $~_P$. That is, we define
\begin{align}
H(XY)_P&:=H(XY)[\bW\times P] ,\quad
\tilde{H}_\alpha(X|Y)_P:=\tilde{H}_\alpha(X|Y)[\bW\times P] \\
I(X;Y)_P&:=I(X;Y)[\bW\times P],\quad
\tilde{I}_\alpha(X|Y)_P:=\tilde{I}_\alpha(X|Y)[\bW\times P].
\end{align}
In addition, we denote the trace norm of a operator $C$ 
and  the von Neumann entropy of the density $\rho$
by $\|C\|_1$ and $S(\rho)$, respectively.

\subsection{Quantum measurements}
\label{S4}
To formulate our general adaptive method for the discrimination of cq-channels,   
we prepare a general notation for quantum measurements with state changes.
A general quantum state evolution from $A$ to $B$
is written as a 
{}{completely positive trace-preserving (cptp)} map $\mathcal{M}$ from the space 
$\mathcal{T}^A$ to the space $\mathcal{T}^B$ of trace class 
operators on $A$ and $B$, respectively.
When we make a measurement on the initial system $A$, 
we obtain the measurement outcome $K$ and the resultant state on the output system $B$.
To describe this situation, we use a set $\{\kappa_k\}_{k \in \mathcal{K}}$ 
of cp maps from the space $\mathcal{T}^A$ to the space $\mathcal{T}^B$
such that
$\sum_{k \in \mathcal{K}} \kappa_k$ is trace preserving.
In this paper, since the classical feed-forward  information is assumed to be a discrete variable,
$\mathcal{K}$ is a discrete (finite or countably infinite) set.
Since it is a decomposition of a cptp map, it is often called a \emph{cp-map valued measure}, 
and an \emph{instrument} if their sum is cptp.\footnote{For simplicity, here and in the rest of the paper, we assume the set $\mathcal{K}$ to be discrete. In fact, if the Hilbert spaces $A$, $B$, etc, on which the cp maps act are finite dimensional, then every instrument is a convex combination, i.e. a probabilistic mixture, of instruments with only finitely many non-zero elements; this carries over to instruments defined on a general measurable space $\mathcal{K}$. Thus, in the finite-dimensional case the assumption of discrete $\mathcal{K}$ is not really a restriction.}
In this case, when the initial state on $A$ is $\rho$ and 
the outcome $k$ is observed with probability 
$\Tr \kappa_k(\rho)$, where
the resultant state on $B$ is $\kappa_k(\rho)/\Tr \kappa_k(\rho)$.
A state on the composite system of the classical system $K$ and the quantum $B$ 
is written as
$\sum_{k \in {\cal K}}|k\rangle \langle k| \otimes \rho_{B|k}
$, which belongs to the vector space 
$\mathcal{T}^{KB}:=
\sum_{k \in {\cal K}}|k\rangle \langle k| \otimes \mathcal{T}^B$.
The above measurement process can be written as the following cptp 
$\mathcal{E}$ map from $\mathcal{T}^A$ to $\mathcal{T}^{KB}$.
\begin{align}
\mathcal{E}(\rho):= \sum_{k \in \mathcal{K}}|k\rangle \langle k| \otimes \kappa_{k}(\rho).\label{NACLL}
\end{align}
In the following, 
both of the above cptp map $\mathcal{E}$ and 
a cp-map valued measure are called a quantum instrument.

\section{Problem formulation}
\Label{formulation}
\subsection{General protocol description}
\Label{prob}
There are two parties Alice and Bob. Alice wants to communicate a message $M$ chosen uniformly  from the set $\{1,\cdots, 2^{nR}\}$ using $n$ uses of a noisy classical-quantum (cq-) channel 
$\bW=\{W_x\}_{x \in {\cal X}}$
from an input classical system ${\cal X}$ composed of finite elements
to a quantum system ${\cal H}_Y$.
We also assume the following condition for our cq-channel $\bW$;
\begin{description}
\item[(NR)]
Any element $x \in {\cal X}$ satisfies
\begin{align}
\min_{x \in {\cal X}}
\min_{P \in \bP({\cal X}\setminus \{x\})}
D\bigg( \sum_{x' \in {\cal X}\setminus \{x\}}P(x')W_{x'} \bigg\|W_x\bigg) 
>0 .
\end{align}
This condition is called the non-redundant condition \cite{BC1,CCDM,W-Protocols}.
\end{description}

They are also allowed to use a noiseless channel any number of times. However, this whole communication process consists of two phases:
\begin{Steps}
 \item {\bf{(Commit phase)}} Based on Alice's choice of message $m \in \{1, \cdots, 2^{nR}\}$, 
 there are $n$ rounds of a multi-round of communication from Alice to Bob and Bob to Alice. In all the discussions below one round of communication ends when first Alice communicates to Bob and then Bob communicates to Alice. 
Further, Alice has a classical memory $Z$ and Bob has a quantum memory $Y'$. 

In the first round, Alice communicates $U_1 = f_1(m, Z)$ to Bob over a noiseless channel,
and also communicates $X_1 = g_1(m, Z)$ over a classical-quantum channel, which Bob receives as quantum state $W_{X_1}$ on the  quantum system $Y_1$.
Then, Bob has the state $\rho_{U_1 Y_1}(m)$.  
After receiving the quantum system $Y_1$ and the classical information $U_1$ from Alice, 
dependently on $U_1=u_1$,
Bob applies the first quantum instrument $\{\Gamma_{v_1|u_1}^{(1)}\}_{v_1 \in {\cal V}_1}:
U_1 Y_1 \to Y'_1 V_1$.
Then, Bob has the state $\rho_{Y'_1 V_1}(m)$.  
Bob sends the outcome $V_1$ to Alice.

In the same way as the above, the $i$-th round is given as follows.
Alice communicates $U_i = f_i(m, Z, V_{i-1})$ to Bob over a noiseless channel 
by using additional classical information $V^{i-1}$.
Also, she communicates $X_i = T_i(m, Z, V_{i-1})$ over a classical-quantum channel, which Bob receives as quantum state $W_{X_i}$ on the quantum system $Y_i$.
Then, Bob has the state $\rho_{Y'_{i-1} U_i Y_i}(m)$.  
After receiving the quantum system $Y_i$ and the classical information $U_i$ from Alice, 
Dependently on $u^i,v^{i-1}$,
Bob applies the $i$-th quantum instrument 
$\{\Gamma_{v_i|u^i,v^{i-1}}^{(i)}\}_{v_i \in {\cal V}_i}:
Y'_{i-1}U_i Y_i \to Y'_i V_i$, where
$u^i=(u_1, \ldots, u_i) $ and $v^{i-1}=(v_1, \ldots, v_{i-1})$.
Then, Bob has the state $\rho_{Y'_i V_i}(m)$.  
Bob sends the outcome $V_i$ to Alice.

We denote honest Bob's behavior in Phase 1 and 
Bob's arbitrary behavior in Phase 1
by ${\cal B}$ and ${\cal B}'$, respectively.
We denote the set of Alice's honest operations in Phase 1 
by ${\cal A}_1=\{{A}_1(m)\}_m$,
where ${A}_1(m)$ is Alice's honest operations in Phase 1 with $M=m$.
After $n$-th round, Bob's state is written as
$W_{{\cal B},A_1(m),X^n=x^n}^{U^n V^n Y'_n}$ or
$W_{{\cal B}',A_1(m),X^n=x^n}^{U^n V^n Y'_n}$
dependently on Alice's and Bob's operations and $X^n=x^n$.
Similarly, we define 
$W_{{\cal B},A_1(m),Z=z}^{U^n V^n Y'_n}$ and
and $W_{{\cal B},A_1(m)}^{U^n V^n Y'_n}$
Also, we define 
$W_{{\cal B},{\cal A}_1}^{U^n V^n Y'_n}
=\sum_{m} 2^{-nR}W_{{\cal B}, A_1(m)}^{U^n V^n Y'_n}$.

It is required that the whole communication process at the end of Phase $1$ doesn't reveal anything about the message $m$ to Bob until Alice reveals him the message in the reveal phase mentioned below.  
This property of Phase $1$ is called the \emph{concealing} property and is defined as follows: 
We call Phase $1$ as $\eps$ concealing for passive Bob
if we have
\begin{align} 
\frac{1}{2}\Big{\|}
W_{{\cal B},A_1(m)}^{U^n V^n Y'_n}-W_{{\cal B},A_1(m')}^{U^n V^n Y'_n}
\Big\|_1 \leq \eps. \Label{XLP}
\end{align} 
for any message pair $(m, m')$ with $m \neq m'$.
We call Phase $1$ as $\eps$ concealing for active Bob
if we have
\begin{align} 
\frac{1}{2}\Big{\|}
W_{{\cal B}',A_1(m)}^{U^n V^n Y'_n}-W_{{\cal B}',A_1(m')}^{U^n V^n Y'_n}
\Big\|_1 \leq \eps. \Label{XLPA}
\end{align} 
for any message pair $(m, m')$ with $m \neq m'$
and Bob's arbitrary behavior ${\cal B}'$ in Phase 1.
{}{The concealing property for active Bob is a stronger condition than 
the concealing property for passive Bob.}

 \item {\bf{(Reveal phase)}} In this phase, Alice reveals her message $M=m$ and 
 her private randomness $Z$ to Bob via a noiseless channel. 
 Bob tries to answer the question ``is the message revealed by Alice correct or not?'' 
For this aim, Bob applies binary valued measurements 
${\cal T}=
\{\{T_{mz},I-T_{mz}\}\}_{mz}$ 
 on the system $U^n V^n Y'_n$, where $T_{mz}$ corresponds to the ``Accept''.
To ensure that Alice doesn't cheat in the reveal phase, i.e., she is not able to reveal some wrong message to Bob, we hope that 
Phase $2$ has {}{this} property which we call as the 
{}{\emph{binding}} property.
Further, the following condition is required; 
{}{if} both Alice and Bob don't cheat, then the measurement outcome at the end of Phase $2$ should ask to Bob to accept the message revealed by Alice. This property of the protocol is called {}{\emph{correctness}}. 

Both these properties of Phase $2$ are defined mathematically as follows.
We denote the set of Alice's honest operations in Phase 2
by ${\cal A}_2=
\{{A}_2(m)\}_m$, 
where ${A}_2(m)$
is Alice's honest operations in Phase 2 with $M=m$.
The {}{correctness} condition is given as
\begin{align}
&\Pr\left\{ m \mbox{ is accepted}\mid \mbox{Alice performs } 
{A}_1(m),{A}_2(m), \mbox{ Bob performs }{\cal B},{\cal T}
\right\} \geq 1-\delta \Label{XLP3}.
\end{align}

There are two kinds of {}{binding} property.
In {}{biding property}, we always assume that Bob is honest. 
The following is the {}{binding} property for passive Alice;
Any Alice's operation ${A}_2'$ for Phase 2 satisfies
\begin{align}
&\Pr\{m' \mbox{ is accepted}\mid \mbox{Alice performs }  
{A}_1(m),{A}_2', \mbox{ Bob performs }{\cal B} , {\cal T}
\} \leq \delta \Label{XLP4}
\end{align}
for $m'\neq m$.

The following is the {}{biding property} for active Alice;
When Alice's operations ${A}_1$ and ${A}_2$ for Phases 1 and 2 satisfy
the condition
\begin{align}
&\Pr\left\{ m \mbox{ is accepted}\mid \mbox{Alice performs } 
{A}_1,{A}_2, \mbox{ Bob performs }{\cal B},{\cal T}
\right\} \geq 1-\delta \Label{XLP5}
\end{align}
with an element $m$,
any Alice's operation ${A}_2'$ satisfies
\begin{align}
&\Pr\left\{ m' \mbox{ is accepted}\mid \mbox{Alice performs } 
{A}_1,{A}_2', \mbox{ Bob performs }{\cal B},{\cal T}
\right\} \leq \delta. \Label{XLP6}
\end{align}
{}{By combining the correctness and the binding property,}
the set of the above conditions is called
a $\delta$-binding condition.
\end{Steps}

The above protocol is written as a combination of
the four parts ${\cal A}_1,{\cal A}_2,{\cal B}$, and ${\cal T}$.
The tuple $({\cal A}_1,{\cal A}_2,{\cal B},{\cal T})$ is called
a protocol with $n$ rounds and is denoted by ${\cal P}$.
We denote the minimum value $\eps$ to satisfy the condition \eqref{XLPA} (the condition \eqref{XLP})
under the protocol ${\cal P}=({\cal A}_1,{\cal A}_2,{\cal B},{\cal T})$ by  $\eps_a({\cal P})$
($\eps_p({\cal P})$).
The value $\eps_a({\cal P})$
($\eps_p({\cal P})$)
depends only on Alice's operation ${\cal A}_1$ in Phase 1 (Alice's operation ${\cal A}_1$ and 
Bob's operation ${\cal B}$ in Phase 1). 
The value $\eps_a({\cal P})$
($\eps_p({\cal P})$) is called 
the active concealing parameter (the passive concealing parameter).
We denote the minimum value $\delta$ to satisfy the conditions \eqref{XLP3} 
and \eqref{XLP4} 
under the protocol ${\cal P}=({\cal A}_1,{\cal A}_2,{\cal B},{\cal T})$ by $\delta_p({\cal P})$,
which is called the passive binding parameter.
We denote the minimum value $\delta$ to satisfy the conditions \eqref{XLP3}, \eqref{XLP5}, 
and \eqref{XLP6} 
under the protocol ${\cal P}=({\cal A}_1,{\cal A}_2,{\cal B},{\cal T})$ by $\delta_a({\cal P})$,
which is called the active binding parameter.
Also, the value $R$ is called the rate of the protocol ${\cal P}$ and is denoted by 
$R({\cal P})$.
\if0
We denote the minimum value $\delta_c$ to satisfy the condition \eqref{XLP3} by 
under the protocol ${\cal P}=({\cal A}_1,{\cal A}_2,{\cal B},{\cal T})$ by $\delta_c({\cal P})$,
which is called the correctness parameter.
We denote the minimum value $\delta$ to satisfy the condition \eqref{XLP4} 
under the protocol ${\cal P}=({\cal A}_1,{\cal A}_2,{\cal B},{\cal T})$ by $\delta_p({\cal P})$,
which is called the passive binding parameter.
We denote the minimum value $\delta$ to satisfy the conditions \eqref{XLP5} and \eqref{XLP6} 
under the protocol ${\cal P}=({\cal A}_1,{\cal A}_2,{\cal B},{\cal T})$ by $\delta_a({\cal P})$,
which is called the active binding parameter.
Also, the value $R$ is called the rate of the protocol ${\cal P}$ and is denoted by 
$R({\cal P})$.
\fi

\subsection{Two subclass of protocols}
Since it is not so easy to discuss a general protocol,
we introduce the invertible condition for ${\cal A}_1$ as follows.
That is, we introduce the class of invertible protocols as the first subclass.
Alice's honest operation ${\cal A}_1$ in Phase 1 is called invertible
when there exist set of TP-CP maps
$\Lambda_{Y'_1V_1 \to U_1 Y_1}, \Lambda_{Y'_2V_2 \to Y'_1 U_2 Y_2 },
\ldots,\Lambda_{Y'_nV_n \to Y'_{n-1} U_n Y_n }$ such that
the relations
\begin{align}
\begin{aligned}
\Lambda_{Y'_1V_1 \to U_1 Y_1}(\rho_{Y'_1V_1}(m)) &=\rho_{U_1 Y_1}(m), \\
\Lambda_{Y'_2V_2 \to Y'_1 U_2 Y_2 }(\rho_{Y'_2V_2}(m))& =\rho_{Y'_1 U_2 Y_2}(m),\\
\vdots& \\
\Lambda_{Y'_nV_n \to Y'_{n-1} U_n Y_n }(\rho_{Y'_n V_n}(m))&
=\rho_{Y'_{n-1} U_n Y_n }(m)
\end{aligned}
\label{CCA}
\end{align}
hold for any $m$.
When all densities $W_x$ are commutative with each other,
any protocol ${\cal P}$ is invertible.

Next, as another subclass,
we introduce the class of non-interactive protocols.
When the classical communication for the variable $U_i$ nor $V_i$
is communicated in Phase 1,
the protocol ${\cal P}=({\cal A}_1,{\cal A}_2,{\cal B},{\cal T})$ by $\delta_p({\cal P})$
is called non-interactive.
Clearly, the class of non-interactive protocols is included in the class of invertible protocols .

\subsection{Asymptotic analysis}
To study the asymptotic limitation of the performance, 
we focus on the rate $R$.
The rate $R$ is called achievable with active attack (passive attack)
under the cq-channel $\bW$
when there exists a sequence of protocols $\{{\cal P}_n\}_{n=1}^{\infty}$ such that
${\cal P}_n$ is a protocol with $n$ rounds, 
$R=\lim_{n\to\infty} R({\cal P}_n)$, 
$\lim_{n\to\infty} \eps_a({\cal P}_n)=0$
($\lim_{n\to\infty} \eps_p({\cal P}_n)=0$)
and 
$\lim_{n\to\infty} \delta_a({\cal P}_n)=0 $ 
($\lim_{n\to\infty} \delta_p({\cal P}_n)=0 $).
The supremum of achievable rate under the cq-channel $\bW$
with active attack (passive attack)
is called the commitment capacity of $\bW$ with active attack (passive attack)
and is denoted by
$C_a(\bW)$ ($C_p(\bW)$).

The rate $R$ is called achievable with invertible protocols and with active attack (passive attack)
under the cq-channel $\bW$
when there exists a sequence of invertible protocols $\{{\cal P}_n\}_{n=1}^{\infty}$ such that
${\cal P}_n$ is an invertible protocol with $n$ rounds, 
$R=\lim_{n\to\infty} R({\cal P}_n)$, 
and
$\lim_{n\to\infty} \eps_a({\cal P}_n)=0$
($\lim_{n\to\infty} \eps_p({\cal P}_n)=0$)
and 
$\lim_{n\to\infty} \delta_a({\cal P}_n)=0 $ 
($\lim_{n\to\infty} \delta_p({\cal P}_n)=0 $).
The supremum of achievable rate with invertible protocols and with active attack (passive attack)
under the cq-channel $\bW$
is called the invertible commitment capacity with active attack (passive attack)
of $\bW$ and is denoted by
$C_{a, inv}(\bW)$
($C_{p, inv}(\bW)$).
In the same way, we define
an achievable rate with non-interactive protocols, and 
the non-interactive commitment capacity with active attack (passive attack)
$C_{a, non}(\bW)$
($C_{p, non}(\bW)$).

From these definitions, we have the following inequalities
\begin{alignat}{2}
  C_{a}(\bW)& \le & C_{p}(\bW) \nonumber\\
      \mid \vee      & & \mid \vee  \nonumber\\
  C_{a, inv}(\bW)& \le & C_{p, inv}(\bW) \label{MO1}\\
      \mid \vee      & & \mid \vee  \nonumber\\
  C_{a, non}(\bW)& \le & C_{p, non}(\bW) .\nonumber
\end{alignat}
When all densities $W_x$ are commutative with each other,
we have $  C_{a}(\bW)=  C_{a,inv}(\bW)$ and $  C_{p}(\bW)=  C_{p,inv}(\bW)$.

Then, we have the following theorem.
\begin{thm}
Assume Condition (NR). Then, we have the following relations;
 \begin{align}
 C_{a, non}(\bW)=C_{p, non}(\bW)= 
  C_{a, inv}(\bW)=C_{p, inv}(\bW)= \sup_{P \in \bP({\cal X}) } H(X|Y)_{P}.
 \end{align}
\hfill $\square$\end{thm}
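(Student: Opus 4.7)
The plan is to establish the theorem by sandwiching all four capacities between the same quantity. By the chain \eqref{MO1}, it suffices to prove the direct bound $C_{a, non}(\bW) \ge \sup_{P \in \bP({\cal X})} H(X|Y)_{P}$ and the converse bound $C_{p, inv}(\bW) \le \sup_{P \in \bP({\cal X})} H(X|Y)_{P}$. These two inequalities, combined with the diagram \eqref{MO1}, pin all four of $C_{a, non}(\bW)$, $C_{p, non}(\bW)$, $C_{a, inv}(\bW)$, $C_{p, inv}(\bW)$ to $\sup_{P} H(X|Y)_P$, giving the claimed equalities.

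For the direct part I would construct a non-interactive protocol, following the strategy announced in the introduction and modeled on \cite{H2021}, but with Bob's list decoder replaced by a projective verification in the reveal phase. Fix $P \in \bP({\cal X})$ close to the maximizer; randomly partition a set of roughly $2^{n H(X)_{P}}$ $P$-typical input sequences into $2^{nR}$ message bins, each of size roughly $2^{n I(X;Y)_{P}}$ indexed by a local randomness variable $z$. In the commit phase Alice transmits $X^n(m,z)$ through $n$ uses of $\bW$ and Bob simply stores the output register. Concealing is established by a quantum channel-resolvability bound phrased in terms of the sandwiched mutual information $\tilde{I}_\alpha(X;Y)_{P}$, which forces the bin-averaged output state $\sum_{z}2^{-n R'}W_{X^n(m,z)}$ to be trace-norm close to a fixed state independent of $m$. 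In the reveal phase Alice announces $(m,z)$ and Bob applies the projection onto a conditionally typical subspace for $W_{X^n(m,z)}$. Correctness then follows from gentle-measurement and typical subspace arguments, while binding reduces to a list-decoding style estimate showing that no alternative pair $(m',z')$ with $m'\neq m$ can also pass the projective test except with exponentially small probability, uniformly over Alice's strategies. Any $R<H(X|Y)_P$ is thereby achievable against active attacks in a non-interactive protocol, which yields the direct bound on $C_{a, non}(\bW)$.

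For the converse part I would consider a sequence of invertible protocols ${\cal P}_n$ with $R({\cal P}_n)\to R$, $\eps_p({\cal P}_n)\to 0$, and $\delta_p({\cal P}_n)\to 0$, and exploit the binding condition in combination with the invertibility condition \eqref{CCA} to extract a decoder for $M$ acting on Bob's Phase-1 output register $U^n V^n Y'_n$. The existence of this decoder is the content of the forthcoming Proposition \ref{Fanos}, whose construction uses the CPTP recovery maps $\Lambda_{Y'_i V_i \to Y'_{i-1} U_i Y_i}$ to unroll the interactive transcript onto the final register, and then stacks the honest Phase-2 measurements $\{T_{mz}\}_{mz}$ into a single POVM that identifies $m$ with probability $1-o(1)$ by the passive binding condition. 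Applying Fano's inequality to this decoder and processing the resulting mutual information through the cq-channel $\bW^{\otimes n}$ gives $nR({\cal P}_n) \le H(X^n\mid Y^n)_{\bar P_n}+o(n)$, where $\bar P_n$ is the input distribution on ${\cal X}^n$ induced by Alice's honest operation ${\cal A}_1$; subadditivity of $H(X|Y)_P$ over product inputs and condition (NR) then bound this by $n\sup_{P\in \bP({\cal X})} H(X|Y)_P+o(n)$, and taking $n\to\infty$ gives $R\le \sup_P H(X|Y)_P$.

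The main obstacle is the converse. The classical Winter--Nascimento--Imai strategy implicitly copies intermediate transcripts in order to convert binding into a decoder and then feed it into Fano's inequality, and such copying has no direct quantum analogue because the intermediate states $\rho_{Y'_i V_i}(m)$ cannot be cloned; naively applying Bob's Phase-2 measurement destroys the register needed to bound $H(M\mid U^n V^n Y'_n)$ from above. The invertibility condition \eqref{CCA} is the precise hypothesis that sidesteps this: the recovery maps $\Lambda_{Y'_i V_i \to Y'_{i-1}U_i Y_i}$ package the entire interactive transcript into a deterministic CPTP post-processing of the cq-state $\bW^{\otimes n}\times \bar P_n$, so that the decoder built in Proposition \ref{Fanos} can be analyzed via the data-processing inequality without invoking any quantum copying. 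Turning this idea into a clean Fano-type bound that simultaneously respects the concealing structure of Phase 1 and the binding structure of Phase 2 is the technical heart of the proof.
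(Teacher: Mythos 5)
Your high-level decomposition is correct: because of the chain \eqref{MO1}, the theorem reduces to the direct bound $C_{a, non}(\bW)\ge \sup_P H(X|Y)_P$ and the converse bound $C_{p, inv}(\bW)\le \sup_P H(X|Y)_P$. Your direct-part sketch is in the right spirit (i.i.d.\ random codewords, privacy-amplification/hashing for concealing, a per-codeword projective test for the reveal phase, and a Hamming-distance argument for binding), even though the paper's actual binding test is built from the Hermitian operators $\Xi_x$ of Lemma~\ref{LS3}, tuned to Condition (NR), rather than from conditionally typical subspaces.

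The converse, however, contains a genuine gap. You propose that Proposition~\ref{Fanos} ``unrolls'' the transcript via the recovery maps $\Lambda_{Y'_iV_i\to Y'_{i-1}U_iY_i}$ and then stacks the honest Phase-2 measurements $\{T_{mz}\}$ into a POVM on Bob's Phase-1 register $U^n V^n Y'_n$ that identifies $m$. That cannot be right: if such a POVM existed, $I(M; U^nV^nY'_n)$ could not be small, contradicting the concealing property you are simultaneously using in Lemma~\ref{cont}. The paper's Proposition~\ref{Fanos} does something structurally different: it constructs a \emph{classical} estimator $h(X^n,U^n,V^n)$ that uses Alice's own channel inputs $X^n$ together with the classical transcript $U^n,V^n$, via the quantities $F(x^n,u^n,v^n\mid mz)$ (acceptance probability of $T_{mz}$ on the conditional state of $Y'_n$). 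Binding and correctness imply that this Alice-side function recovers $m$ with high probability; it does not let Bob recover $m$. This is exactly the point emphasized in Subsection~\ref{rem1}: unlike the classical case, the raw channel outputs $Y^n$ are no longer present in Phase~2, so the decoder must be fed Alice's inputs rather than Bob's post-instrument register.

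You also place the use of invertibility in the wrong spot. Proposition~\ref{Fanos} holds for \emph{any} protocol; the recovery maps in \eqref{CCA} are not used there. Invertibility enters only in the proof of Theorem~\ref{con}, in the identity $H(X^n\mid U^nV^nY'_n)[\rho_2]=H(X^n\mid Y^nU^n)[\rho_3]$ (Eq.~\eqref{XL4}), which lets the conditional entropy on Bob's post-instrument register be replaced by the conditional entropy on a virtual register carrying the actual channel outputs $Y^n$. After that, subadditivity and Lemma~\ref{XLPLL} (concavity of $P\mapsto H(X|Y)_P$) give $H(X^n\mid Y^n)\le n\sup_P H(X|Y)_P$. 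Without correcting both the target register of the Fano decoder and the location of the invertibility step, the converse as you describe it does not go through.
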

This theorem is composed of two parts because of \eqref{MO1}.
\begin{align}
 C_{a, non}(\bW) &\ge \sup_{P \in \bP({\cal X}) } H(X|Y)_{P} \Label{MO2}\\
 C_{p, inv}(\bW) &\le \sup_{P \in \bP({\cal X}) } H(X|Y)_{P}\Label{MO3} .
\end{align}
{}{That is, separating active and passive scenarios, 
we clarify what properties are used in the direct and converse parts in the above way.}
 
 \section{Symmetric channel}\Label{symmetric}
\subsection{Formulation}
 As a typical example of cq-channel, we consider symmetric channel.
We consider a finite group ${\cal G}$ as the input classical system ${\cal X}$,
and a state $\rho$ on the quantum system ${\cal H}_Y$.
Also, we consider a unitary representation $U$ of ${\cal G}$ on ${\cal H}_Y$\cite{Hgroup}.
That is, for an element $g \in {\cal G}$, the unitary $U_g$ is defined to satisfy 
the following conditions; $U_e=I$ and $U_{g}U_{g'}=U_{gg'}$, where $e \in {\cal G}$
is the unit element.
When $U_e=I$ and 
there exists a complex number $e^{i \theta(g,g')}$ for 
$g,g' \in {\cal G}$ such that $
U_{g}U_{g'}=e^{i \theta(g,g')} U_{gg'}$,
the set of unitaries $\{U_g \}_{g \in {\cal G}}$
is called a projective unitary representation \cite{Hgroup}.
In the following, we assume that $\{U_g \}_{g \in {\cal G}}$ forms 
a projective unitary representation.

Then, we define the cq-channel as
 $W_g:= U_g \rho U_g^\dagger$.
 This channel is called a symmetric channel.
When $\{U_g \}_{g \in {\cal G}}$ forms 
a projective unitary representation,
we have
 $W_{g g'}:=U_g U_{g'} \rho U_{g'}^\dagger U_g^\dagger$.
Hence, we do not need to care the phase factor $e^{i \theta(g,g')} $
when we focus on the states $\{W_{g}\}_{g \in {\cal G}}$.

This channel with the commutative group was discussed in  
the reference \cite[Section VII-A-2]{hayashi2015quantum}.
The paper \cite{korzekwa2019encoding} studied such a channel model in the context of resource theory of asymmetry in the pure state case.
Recently, the papers \cite{hayashiWang,Wu} addressed this type of channels in the context of 
dense coding and private dense coding.
This class of cq-symmetric channels is a quantum generalization of
a regular channel \cite{delsarte1982algebraic}, which is a useful class of
channels in classical information theory.
This class of classical channels is often called
generalized additive \cite[Section V]{hayashi2011exponential}
or conditional additive \cite[Section 4]{hayashi2011exponential}
and contains a class of additive channels as a subclass.
Such a channel appears even in wireless communication
by considering binary phase-shift keying (BPSK) modulations \cite[Section 4.3]{hayashi2020finite}.
Its most simple example is the binary symmetric channel (BSC).

In the above symmetric channel, we define 
the stabilizer ${\cal K}\subset {\cal G}$ as
\begin{align}
{\cal K}:= \{g \in {\cal G}| W_g=W_e\},
\end{align}
where $e\in {\cal G} $ expresses the unit element of the group ${\cal G}$.

Since an element of ${\cal K}$ output the same state as the unit element $e$,
we consider the channel with the input system ${\cal X}:={\cal G}/{\cal K}$
as $W_{[g]}:= W_g$.
We call this type of channel an induced symmetric channel.
\begin{lem}\label{XLPLL4}
Any induced symmetric channel satisfies the Condition (NR).
\hfill $\square$\end{lem}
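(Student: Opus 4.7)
The plan is to prove Condition (NR) by contradiction, combining an injectivity observation on the quotient ${\cal G}/{\cal K}$ with strict concavity of the von Neumann entropy.

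First I would verify that the map $[g]\mapsto W_{[g]}=U_g\rho U_g^\dagger$ is well-defined and injective on ${\cal G}/{\cal K}$. Well-definedness follows because the projective phase $e^{i\theta(g,g')}$ in $U_gU_{g'}=e^{i\theta(g,g')}U_{gg'}$ cancels in the conjugation, so $W_{gg'}=U_g W_{g'}U_g^\dagger$. Injectivity is then immediate: $W_{[g]}=W_{[g']}$ iff $U_{g^{-1}g'}\rho U_{g^{-1}g'}^\dagger=\rho$, iff $W_{g^{-1}g'}=W_e$, iff $g^{-1}g'\in {\cal K}$.

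Next I would fix an arbitrary $[g]\in {\cal G}/{\cal K}$ and a probability distribution $P$ supported on ${\cal G}/{\cal K}\setminus\{[g]\}$, and set $\sigma:=\sum_{[g']}P([g'])W_{[g']}$. Assume for contradiction that $D(\sigma\,\|\,W_{[g]})=0$; this forces $\sigma=W_{[g]}$. Since each $W_{[g']}$ is a unitary conjugate of $\rho$, we have $S(W_{[g']})=S(\rho)$ for every $[g']$, and hence by concavity of the von Neumann entropy
\begin{align*}
S(\rho)=S(W_{[g]})=S(\sigma)\ge\sum_{[g']}P([g'])\,S(W_{[g']})=S(\rho),
\end{align*}
so equality holds. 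Using the identity $S(\sigma)-\sum_{[g']}P([g'])S(W_{[g']})=\sum_{[g']}P([g'])\,D(W_{[g']}\,\|\,\sigma)$ and non-negativity of relative entropy, equality forces $W_{[g']}=\sigma=W_{[g]}$ for every $[g']$ with $P([g'])>0$, contradicting the injectivity established above.

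To upgrade this pointwise positivity into strict positivity of the infimum in Condition (NR), I would appeal to compactness of the probability simplex on the finite set ${\cal G}/{\cal K}\setminus\{[g]\}$ together with lower semicontinuity of the relative entropy in its first argument: the infimum is attained (or equals $+\infty$) and is therefore strictly positive, and the further outer minimum over the finite set ${\cal X}={\cal G}/{\cal K}$ preserves positivity. The main technical ingredient is the equality case of concavity of $S$, but this is standard and I do not expect a real obstacle.
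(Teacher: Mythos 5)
Your argument is correct, but it takes a genuinely different route from the paper's. The paper's proof works by group covariance alone: it introduces the set ${\cal X}_0$ of classes $[g_0]$ for which $W_{[g_0]}$ is \emph{not} a convex combination of the others and shows, using $W_{g_0}=U_{g_0g^{-1}}W_g U_{g_0g^{-1}}^\dagger$, that if one class were redundant then every class would be, which contradicts ${\cal X}_0\neq\emptyset$ (the latter is left implicit and rests on the existence of extreme points among the pairwise-distinct states $W_{[g]}$ of the induced channel). You instead exploit a different consequence of covariance — that every $W_{[g']}$ is a unitary conjugate of $\rho$ and hence isentropic — and then invoke the equality case of concavity of the von Neumann entropy via the identity $S(\sigma)-\sum_i p_i S(\rho_i)=\sum_i p_i D(\rho_i\|\sigma)$ to force $W_{[g']}=W_{[g]}$ for every $[g']$ in the support of $P$, contradicting injectivity of $[g]\mapsto W_{[g]}$ on ${\cal G}/{\cal K}$. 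Your route is slightly more self-contained, since it does not require separately establishing ${\cal X}_0\neq\emptyset$; the paper's is shorter once extremality is granted and makes the orbit-transitivity of the channel more visible. The final compactness-and-lower-semicontinuity step you include to upgrade pointwise positivity of $D(\sigma_P\|W_{[g]})$ to positivity of the minimum is needed for the paper's proof as well, though the paper does not spell it out.
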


\begin{example}\Label{Ex1}
We consider the case when ${\cal G}=\mathbb{Z}_d$ and ${\cal H}$ is spanned by
$\{|j \rangle\}_{j=0}^{d-1}$.
We define the representation 
$U_g:=\sZ^g$,
where
$\sZ:=\sum_{j=0}^{d-1} e^{2\pi j i/d }|j\rangle \langle j| $.
We define 
$|\phi\rangle:= \sum_{j=0}^{d-1}a_j | j\rangle$ with $a_j \neq 0$ for $j=0, \ldots, d-1$.
When we choose the state $\rho$ to be $|\phi\rangle \langle \phi|$, 
the vectors $\{U_g |\phi\rangle\}_{g=0, \ldots, d-1}$ are linearly independent.
Hence, we have ${\cal K}=\{0\}$.
\hfill $\square$
\end{example}

\begin{example}\Label{Ex2}
Next, we consider the case with $d=p q$ in Example \ref{Ex1}.
We choose $|\psi\rangle := \sum_{j=0}^{p-1}  b_j |q j \rangle$
with $b_j \neq 0$ for $j=0, \ldots, p-1$.
When we choose the state $\rho$ to be $|\psi\rangle \langle \psi|$, 
the vectors $\{U_j |\psi\rangle\}_{j=0}^{p-1}$ are linearly independent
and $ U_j |\psi\rangle= U_{j+ pk} |\psi\rangle$ for $j=0, \ldots, p-1$ and $k=0, \ldots, q-1 $.
Hence, we have ${\cal K}=\{ pk \}_{k=0}^{q-1}$.
\hfill $\square$
\end{example}

\subsection{Calculation of commitment capacity}
To calculate the commitment capacity
$\sup_{P \in \bP({\cal X}) } H(X|Y)_{P}$
of the induced channel, we prepare the following lemma.

\begin{lem}\label{XLPLL}
The function $P_X \mapsto H(X|Y)_{P_X}$ is concave.
\hfill $\square$\end{lem}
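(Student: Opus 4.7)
\begin{proofof}{Lemma \ref{XLPLL} (sketch)}
The plan is to prove concavity via the standard classical--flag trick, reducing the statement to strong subadditivity (equivalently, to the fact that discarding classical side information cannot decrease the conditional entropy).

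Fix $\lambda \in [0,1]$ and two distributions $P_X^{(0)},P_X^{(1)} \in \bP({\cal X})$, and set $P_X := \lambda P_X^{(0)} + (1-\lambda) P_X^{(1)}$. I would introduce an auxiliary classical register $L \in \{0,1\}$ with $\Pr(L=0)=\lambda$, $\Pr(L=1)=1-\lambda$, and form the cq-state
\begin{align}
\rho_{LXY} := \sum_{\ell \in \{0,1\}} \mu(\ell)\, |\ell\rangle\langle\ell|_L \otimes \bigl(\bW \times P_X^{(\ell)}\bigr),
\end{align}
where $\mu(0)=\lambda$, $\mu(1)=1-\lambda$. Two immediate observations follow. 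First, tracing out $L$ yields $\rho_{XY} = \bW \times P_X$, so $H(X|Y)[\rho_{XY}] = H(X|Y)_{P_X}$. Second, because $L$ is classical, the joint state conditioned on $L=\ell$ is exactly $\bW \times P_X^{(\ell)}$, and therefore
\begin{align}
H(X|YL)[\rho_{LXY}] = \lambda\, H(X|Y)_{P_X^{(0)}} + (1-\lambda)\, H(X|Y)_{P_X^{(1)}}.
\end{align}

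The remaining ingredient is the inequality
\begin{align}
H(X|Y)[\rho_{XY}] \ge H(X|YL)[\rho_{LXY}],
\end{align}
which is equivalent to $I(X;L|Y)[\rho_{LXY}] \ge 0$ and hence a direct consequence of strong subadditivity of the von Neumann entropy. Combining this with the two identities above gives
\begin{align}
H(X|Y)_{P_X} \;\ge\; \lambda\, H(X|Y)_{P_X^{(0)}} + (1-\lambda)\, H(X|Y)_{P_X^{(1)}},
\end{align}
which is the desired concavity.

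There is no real obstacle here: the only non-trivial step is strong subadditivity, which is a standard fact. The rest is just the classical-flag construction used to express a convex combination of input distributions as a partial trace of a single tripartite state.
\end{proofof}
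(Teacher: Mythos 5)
Your proof is correct and takes essentially the same approach as the paper: both introduce a classical flag register ($L$ in your notation, $Z$ in the paper's) to write the convex combination $\lambda P_X^{(0)} + (1-\lambda)P_X^{(1)}$ as the marginal of a tripartite cq-state. Both then conclude from $H(X|YL) \le H(X|Y)$ (i.e., $I(X;L|Y)\ge 0$, strong subadditivity), using that conditioning on the classical flag gives the weighted average of the two conditional entropies.
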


In addition, 
for the calculation of the quantity 
$\sup_{P \in \bP({\cal X}) } H(X|Y)_{P}$, we prepare the following things.
A projective unitary representation $\{U_g \}_{g \in {\cal G}}$ on the Hilbert space ${\cal H}$
is called irreducible
when the following condition holds;
When a subspace ${\cal H}' $ of ${\cal H}$ satisfies
the condition $U_g {\cal H}'={\cal H}' $ for $g \in {\cal G}$,
${\cal H}' $ is ${\cal H}$ or $0$.
An example of irreducible projective representation is given in Example \ref{Ex5}.
When ${\cal G}$ is a commutative group and 
$\{U_g \}_{g \in {\cal G}}$ is an irreducible unitary representation of ${\cal G}$
on ${\cal H}$,
the dimension of ${\cal H}$ is $1$.

We denote the set of the irreducible projective unitary representations by $\hat{\cal G}$.
For an element $\lambda \in \hat{\cal G}$, we denote the corresponding representation space and the corresponding unitary representation 
by ${\cal H}_\lambda$ and $U^\lambda$, respectively.
We denote the dimension of ${\cal H}_\lambda$ by $d_\lambda$.
Generally, the representation space ${\cal H}_Y$
can be written as
\begin{align}
{\cal H}_Y=\bigoplus_{\lambda \in \hat{\cal G}} {\cal H}_{\lambda} \otimes 
\mathbb{C}^{n_\lambda},
\end{align}
where $n_\lambda$ expresses the multiplicity of the irreducible unitary presentation $U^\lambda$.

\begin{example}\Label{Ex3}
To see the multiplicity in the most simple example, 
we consider the case of commutative group ${\cal G}=\mathbb{Z}_d$ with ${\cal H}$ spanned by 
$\{|j\rangle\}_{j=1}^n$.
Assume that $U_g$ is given as
$\sum_{j=1}^n e^{i 2 \pi g/d }|j\rangle \langle j|$ for $g \in \mathbb{Z}_d$.
In this case, 
$U_{g,j}:= e^{i 2 \pi g/d }|j\rangle \langle j|$ is an irreducible representation on the one-dimensional space
spanned by $ |j\rangle$.
The representation $U_{g,j}$ has the same structure as $U_{g,0}$ for any $j =1, \ldots, n$, where
$U_{g,0}:= e^{i 2 \pi g/d }|0\rangle \langle 0|$.
Hence, the representation $U_{g,j}$ is equivalent to the representation $U_{g,0}$.
The representation $U_g$ contains $n$ irreducible representations equivalent to the representation $U_{g,0}$.
Now, we consider the space spanned by $\{ |0,j\rangle\}_{j=1}^n$, which equals 
${\cal H}_0 \otimes \mathbb{C}^n$, where 
${\cal H}_0$ is spanned by $|0\rangle$ and $\mathbb{C}^n$ is spanned by 
$\{|j\rangle\}_{j=1}^n$.
When the representation on ${\cal H}_0 \otimes \mathbb{C}^n$ is given as $U_{g,0} \otimes I $,
this representation has the same structure as the above representation $U_g$.
In this case, the dimension of the second space expresses the number of 
the same representation $n$, which is considered as the multiplicity.
In addition,$U_g$ is a constant times of the identity $I$, 
${\cal K}={\cal G}$.
\hfill $\square$
\end{example}

When the representation space ${\cal H}_Y$ does not contain
a subspace that equivalent to the irreducible representation space
${\cal H}_\lambda$, $n_\lambda$ is zero.
We denote the projection to the space ${\cal H}_{\lambda} \otimes 
\mathbb{C}^{n_\lambda}$ by $P_\lambda$.
We define the state $ \rho_\lambda$ on $ \mathbb{C}^{n_\lambda}$
and the probability $p(\lambda)$ as
\begin{align}
p(\lambda):= \Tr P_\lambda \rho,\quad
\rho_\lambda:= \frac{1}{p(\lambda)}\Tr_{{\cal H}_\lambda} P_\lambda \rho P_\lambda.
\end{align}

Since the average state 
$\sum_{g \in {\cal G}} \frac{1}{|{\cal G}|}W_g$ is commutative with $U_g$ for any $g \in {\cal G}$,
Schur's lemma \cite[Lemma 2.4]{Hgroup} guarantees that 
it has the form 
$\bigoplus_{\lambda \in \hat{\cal G}} \frac{p(\lambda)}{d_\lambda} 
I_\lambda\otimes \sigma_\lambda$.
Since $\sigma_\lambda$ coincides with $\rho_\lambda$, we have
\begin{align}
\sum_{g \in {\cal G}} \frac{1}{|{\cal G}|}W_g
=\bigoplus_{\lambda \in \hat{\cal G}} \frac{p(\lambda)}{d_\lambda} 
I_\lambda\otimes \rho_\lambda,
\end{align}
which implies that
\begin{align}
S\Big(\sum_{g \in {\cal G}} \frac{1}{|{\cal G}|}W_g\Big)
=\sum_{\lambda \in \hat{\cal G}}p(\lambda)
\Big(S(\rho_\lambda)+\log \frac{d_\lambda}{p(\lambda)}\Big).
\Label{CZJ}
\end{align}

Using \eqref{CZJ} and Lemma \ref{XLPLL}, we can show the following lemma.
\begin{lem}\Label{CXL}
For an induced symmetric channel, 
the uniform distribution achieves the commitment capacity.
\hfill $\square$
\end{lem}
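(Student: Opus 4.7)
\begin{proofof}{Lemma \ref{CXL} (sketch)}
The plan is to combine the concavity of $P \mapsto H(X|Y)_P$ from Lemma \ref{XLPLL} with a symmetrization argument driven by the group action. I would first exhibit a group action on $\bP({\cal X})$ under which $H(X|Y)_P$ is invariant, then average over the group and invoke concavity to conclude that the average, which is the uniform distribution, is at least as good as any other distribution.

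First, I would set up the symmetry carefully. For ${\cal X}={\cal G}/{\cal K}$ the group ${\cal G}$ acts on ${\cal X}$ by left multiplication, $g\cdot [x] := [gx]$, and this action is transitive. For a distribution $P \in \bP({\cal X})$ I define the translate $P^g([x]) := P([g^{-1}x])$. The defining relation $W_{[gx]} = U_g W_{[x]} U_g^\dagger$ gives, for the joint cq-state,
\begin{align}
\bW\times P^g
= \sum_{[x] \in {\cal X}} P([x]) |[gx]\rangle\langle [gx]| \otimes U_g W_{[x]} U_g^\dagger
= (V_g \otimes U_g)(\bW\times P)(V_g \otimes U_g)^\dagger,
\end{align}
where $V_g$ is the permutation unitary on the classical register $X$ defined by $V_g |[x]\rangle = |[gx]\rangle$. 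Since conditional entropy is invariant under local unitaries applied separately to the two registers, $H(X|Y)_{P^g} = H(X|Y)_P$ for every $g \in {\cal G}$.

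Second, I would average. Define $\bar P := \frac{1}{|{\cal G}|} \sum_{g \in {\cal G}} P^g$. Because the action of ${\cal G}$ on ${\cal X}={\cal G}/{\cal K}$ is transitive, $\bar P$ is the uniform distribution $P_{\rm mix}$ on ${\cal X}$, independently of the choice of $P$. By the concavity statement in Lemma \ref{XLPLL} and the invariance just shown,
\begin{align}
H(X|Y)_{P_{\rm mix}} = H(X|Y)_{\bar P}
\ge \frac{1}{|{\cal G}|} \sum_{g \in {\cal G}} H(X|Y)_{P^g}
= H(X|Y)_{P}.
\end{align}
Taking the supremum over $P$ yields $H(X|Y)_{P_{\rm mix}} = \sup_{P \in \bP({\cal X})} H(X|Y)_P$, which is the claim.

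I expect the only delicate point to be verifying that $V_g \otimes U_g$ really implements the transformation $\bW\times P \mapsto \bW\times P^g$ as a unitary conjugation on the joint system, together with checking that averaging the translates $P^g$ over ${\cal G}$ produces the uniform distribution on the quotient ${\cal G}/{\cal K}$ (and not merely a ${\cal K}$-invariant distribution on ${\cal G}$). Both are straightforward once the quotient bookkeeping is set up, and no new estimates beyond Lemma \ref{XLPLL} are needed.
\end{proofof}
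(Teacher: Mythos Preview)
Your proof is correct and follows essentially the same approach as the paper's own proof: use the unitary invariance of $H(X|Y)_P$ under the group action to conclude $H(X|Y)_{P^g}=H(X|Y)_P$, then average over ${\cal G}$ and apply the concavity Lemma~\ref{XLPLL}. Your version is more explicit about the joint unitary $V_g\otimes U_g$ and about why the average over translates on the quotient ${\cal G}/{\cal K}$ yields the uniform distribution, but the argument is the same.
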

In the above lemma, we need to address the induced symmetric channel
instead of the symmetric channel 
because 
the symmetric channel  does not satisfy Condition (NR) unless ${\cal K}=\{e\}$.


We denote the uniform distribution on the set ${\cal X}$ of inputs 
of the induced symmetric channel by $P_{\uni,{\cal X}}$.
Then, due to Lemma \ref{CXL}, the commitment capacity is calculated as
 \begin{align}
\sup_{P \in \bP({\cal X}) } H(X|Y)_{P}
=H(X|Y)_{P_{\uni,{\cal X}}}
=\log |{\cal X}|+ S(\rho)
-\sum_{\lambda \in \hat{\cal G}}
p(\lambda) \Big(S(\rho_\lambda)+\log \frac{d_\lambda}{p(\lambda)}\Big),\Label{XOF}
 \end{align}
where the second equation follows from \eqref{CZJ}. 
In particular, when ${\cal K}=\{e\}$, 
the symmetric channel satisfies 
 \begin{align}
\sup_{P \in \bP({\cal G}) } H(X|Y)_{P}
=\log |{\cal G}|+S(\rho)- \sum_{\lambda \in \hat{\cal G}}
p(\lambda) \Big(S(\rho_\lambda)+\log \frac{d_\lambda}{p(\lambda)}
\Big).
\Label{XOF2}
 \end{align}

In the following, we consider several typical cases.
When 
the projective representation $U$ is an irreducible representation
$U^\lambda$, we have
 \begin{align}
\sup_{P \in \bP({\cal G}) } H(X|Y)_{P}
=\log |{\cal X}|+S(\rho)- \log d_\lambda .
\Label{XOF3}
 \end{align}

When 
${\cal G}$ is a commutative group and $U$ has no multiplicity,
we have
 \begin{align}
\sup_{P \in \bP({\cal G}) } H(X|Y)_{P}
=\log |{\cal X}|+S(\rho)+ \sum_{\lambda \in \hat{\cal G}}
p(\lambda) \log p(\lambda)
\Label{XOF4}
 \end{align}
because the dimension of irreducible representation is $1$.

\begin{example}\Label{Ex4}
Eq. \eqref{XOF4} guarantees that
the capacity of the model in
 Example \ref{Ex1} is 
$ \log d+ \sum_{j=0}^{d-1} |a_j|^2\log |a_j|^2$.
Also, due to \eqref{XOF4}, 
the capacity of the model in Example \ref{Ex2}
 is calculated to
$ \log p+ \sum_{j=0}^{p-1} |b_j|^2\log |b_j|^2$.
\hfill $\square$\end{example}

\begin{example}\Label{Ex5}
We apply our result to dense coding with a general state \cite{dense,Hiroshima}.
For this aim, we consider $\mathbb{Z}_d,$, $\sZ$,  and ${\cal H}$ in the same way as Example \ref{Ex1}.
We define the operator $\sX:= \sum_{j=0}^{d-1} |j+1\rangle \langle j| $, where $|d\rangle=|0\rangle$.
For ${\cal G}=\mathbb{Z}_d^2$, we define 
$U_{j,k}:= \sX^j \sZ^k$. Since the relation
$ U_{j,k} U_{j',k'}= (e^{2\pi /d})^{j' k} U_{j+j',k+k'}$ holds,
$\{U_{j,k}\}_{(j,k)\in }\mathbb{Z}_d^2$ forms a projective irreducible representation \cite[Section 8.1.1]{Hgroup}.
The receiver has the system ${\cal H}_B$ with the same dimension as the sender's system ${\cal H}$.
Assume that the sender and the receiver share the $n$ copies of a state 
$\rho$ on the composite system ${\cal H}\otimes {\cal H}_B$.
Then, the sender is allowed to apply one of $\{U_{j,k}\}_{(j,k)\in }\mathbb{Z}_d^2$ on the system ${\cal H}$ and send it to the receiver 
as one use of the channel. 
In this situation, 
${\cal H}$ is the irreducible representation space \cite[Chapter 8]{Hgroup}, and
the space ${\cal H}_B$ shows the multiplicity.
We assume that 
$\rho$ is not commutative with $ U_{j,k}$ unless $(j,k)=(0,0)$.
Then, we find that ${\cal K}=\{(0,0)\}$.
By using \eqref{XOF2}, 
the capacity is calculated to
$\log d+S(\rho)- S(\rho_B) $.
\hfill $\square$\end{example}
 
\subsection{Proofs}
This subsection proves the lemmas stated in this section.
\subsubsection{Proof of Lemma \ref{XLPLL4}}
We show Condition (NR) by contradiction.
We define the set ${\cal X}_0 \subset {\cal X}={\cal G}/{\cal K}$ as
\begin{align}
{\cal X}_0:=\Big\{ [g] \in {\cal X}\Big|  W_g \hbox{ cannot be written as }
\sum_{ [g'] \in  {\cal X} \setminus \{[g]\}} P([g'])W_{g'}
\Big\}.
\end{align}
We assume that Condition (NR) does not hold, i.e., 
${\cal X}_0 \neq {\cal X}$.
We choose an element $[g] \in {\cal X}\setminus {\cal X}_0$ and 
a distribution $P$ on ${\cal X}\setminus \{[g]\}$ such that
$ W_g=\sum_{ [g'] \in  {\cal X}} P([g'])W_{g'}$.
For an element $[g_0]\in {\cal X}_0 $,
we have 
\begin{align}
 W_{g_0}=& U_{g_0 g^{-1}} W_gU_{g_0 g^{-1}}^\dagger
=\sum_{ [g'] \in  {\cal X}} P([g'])U_{g_0 g^{-1}} W_{g'}U_{g_0 g^{-1}}^\dagger\nonumber \\
=&\sum_{ [g'] \in  {\cal X}} P([g'])W_{g_0 g^{-1} g'}
=\sum_{ [g'] \in  {\cal X}} P([g g_0^{-1}g'])W_{ g'},
\end{align}
which implies the contradiction to the condition $[g_0]\in {\cal X}_0 $.
Hence, Condition (NR) holds.
\endproof

\subsubsection{Proof of Lemma \ref{XLPLL}}
We consider the state
$\rho=\lambda |0\rangle \langle 0|_Z \otimes 
\sum_{x \in {\cal X}} P_0(x) \otimes W_x
  +
  (1-\lambda) |1\rangle \langle 1|_Z \otimes 
\sum_{x \in {\cal X}} P_1(x) \otimes W_x$ on the system $Z,X,Y$.
Then, we have
\begin{align}
\lambda H(X|Y)_{P_0}+(1-\lambda) H(X|Y)_{P_1}
=H(X|YZ)[\rho] \le H(X|Y)[\rho]
=H(X|Y)_{\lambda P_0+(1-\lambda) P_1}.
\end{align}
\endproof

\subsubsection{Proof of Lemma \ref{CXL}}
Since a unitary operation does not change the information quantity $H(X|Y)_{P}$,
we have $H(X|Y)_{P_g}=H(X|Y)_{P}$, where
$P_g(x):= P( gx)$ for $x \in {\cal X}$.
Hence, Lemma \ref{XLPLL} implies that
\begin{align}
H(X|Y)_{P} \le H(X|Y)_{\sum_g P_g}=H(X|Y)_{P_{\uni},{\cal X}}.
\end{align}
\endproof
 
\section{Converse Part} 
\label{converse}
\subsection{Proof of \eqref{MO3}}
 The converse part \eqref{MO3} follows from the following theorem.
 \begin{thm} \Label{con}
Given an invertible protocol ${\cal P}_n=({\cal A}_1,{\cal A}_2,{\cal B},{\cal T})$,
there exists a distribution $P_X$ on ${\cal X}$ such that
\begin{align}
(1-\eps-  3\sqrt[3]{\delta}) R \leq H(X \mid Y)_{P_X}+ \frac{1 + \eta(\eps)}{n}, 
\Label{XCI}
\end{align}
where 
$R=R({\cal P}_n) $,
$\delta=\delta_p({\cal P}_n) $,
$\eps=\eps_p({\cal P}_n) $,
and
$\eta(\eps):= (\eps+1)\log(\eps+1)-\eps\log(\eps)$.
\if0
The rate $R$ of communication for any $\eps$-concealing and $\delta$-binding communication protocol over $n$ uses of the channel $\mathcal{N}_{X \to B}$ must satisfy the following: 
$$(1-\eps-  3\sqrt[3]{\delta}) R \leq H(X \mid Y)+ \frac{1 + \eta(\eps)}{n}, $$
\fi
\end{thm}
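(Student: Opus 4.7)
The plan is to adapt the Winter--Nascimento--Imai converse argument to the interactive cq-setting, using invertibility to reduce Bob's interactive record to a clean cq-state on which standard entropic identities apply. First, I would use the invertibility assumption \eqref{CCA}: the chain of CPTP inverses $\Lambda_{Y'_i V_i \to Y'_{i-1} U_i Y_i}$ reversibly maps between Bob's post-Phase-1 state $\rho_{Y'_n V^n}(m)$ and the ``unwound'' cq-state $\rho_{U^n Y^n}(m)$, so data processing gives equality of Holevo-type quantities across this transformation. Combining this with the $\eps$-concealing hypothesis \eqref{XLP} and the Alicki--Fannes--Winter continuity bound for conditional entropy (whose correction term matches $\eta$ by construction), I obtain
\[
I(M; Y^n) \;\le\; I(M; U^n V^n Y'_n) \;\le\; \eps nR + \eta(\eps).
\]

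The main new step --- packaged in the paper as Proposition \ref{Fanos} --- is to convert the $\delta$-binding hypothesis into a quantum Fano-type inequality
\[
H(M \mid X^n Y^n) \;\le\; 3\sqrt[3]{\delta}\,nR + 1.
\]
Its construction proceeds as follows. Alice's honest Phase-2 strategy together with Bob's binary test $T_{mz}$ furnishes, for each admissible $(m,z)$, an operator that accepts the honest commit-to-$m$ state with probability $\ge 1-\delta$, while binding ensures that no alternative Phase-2 strategy can get any $m'\neq m$ accepted with probability exceeding $\delta$. Pulling this back through the $\Lambda$-inversions of the first step yields a POVM on the clean cq-state on $(X^n,Y^n)$ that decodes $M$ with error controlled by $\delta$; quantum Fano then gives the stated bound, with the cube-root exponent arising from a gentle-measurement argument whose slack is split into three roughly equal pieces.

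For the third step, choose $P_X$ to be the average input marginal $P_X(x)=\tfrac{1}{n}\sum_{i=1}^n \Pr[X_i=x]$ under the honest protocol. The quantum chain rule and strong subadditivity give $H(X^n \mid Y^n)\le \sum_i H(X_i \mid Y_i)$, and concavity of $P\mapsto H(X|Y)_P$ (Lemma \ref{XLPLL}) then yields $H(X^n \mid Y^n)\le n H(X|Y)_{P_X}$. Chaining everything together via Fano,
\begin{align*}
nR = H(M)
&\le I(M;Y^n) + H(M\mid Y^n) \\
&\le \eps nR + \eta(\eps) + H(MX^n\mid Y^n) \\
&= \eps nR + \eta(\eps) + H(X^n\mid Y^n) + H(M\mid X^n Y^n) \\
&\le \eps nR + \eta(\eps) + n H(X|Y)_{P_X} + 3\sqrt[3]{\delta}\,nR + 1,
\end{align*}
and rearrangement (together with division by $n$) yields the claimed inequality.

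The main obstacle is the binding-to-Fano step. Classically, binding reduces immediately to Fano because $H(M\mid X^n Y^n)$ is bounded by the error probability of the explicit decoder $\hat M(X^n,Y^n,\text{public})$. Quantum-mechanically, however, the tests $T_{mz}$ for different $(m,z)$ neither commute nor form a single POVM, and the interactive exchange of $(U_i,V_i)$ obstructs even the definition of the relevant ``state to decode from''. This is precisely where invertibility is indispensable: it furnishes the passage between Bob's actual transcript and the clean cq-state on $(X^n,Y^n)$ on which a recovery POVM can be constructed and analysed by gentle-measurement estimates. Extracting the specific $3\sqrt[3]{\delta}\,nR$ slack (independent of $n$) rather than a weaker, $n$-growing error is the delicate quantitative piece.
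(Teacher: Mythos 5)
Your high-level skeleton is correct (Fano from binding, Alicki--Fannes/Fannes continuity from concealing, chain rule plus concavity of $P\mapsto H(X|Y)_P$ to localize to a single-letter $P_X$, then rearrange), and the way you organize the final chain of inequalities is equivalent to the paper's. However, the way you fill in the binding-to-Fano step — the content of Proposition \ref{Fanos} — is not what the paper does and, more importantly, would not go through.

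You propose to pull the acceptance tests $T_{mz}$ back through the inverse maps $\Lambda$ to obtain a POVM on a ``clean'' cq-state over $(X^n, Y^n)$, and to control the error by a gentle-measurement argument producing the $3\sqrt[3]{\delta}$ slack. The paper does something structurally different and genuinely necessary: the estimate of $M$ is a purely \emph{classical deterministic function} $h(X^n, U^n, V^n)$, not a POVM on a quantum system, and the invertibility assumption plays no role at all in its construction. The point — which the paper explicitly stresses in its discussion of prior work — is that in the quantum case the channel outputs $Y^n$ no longer exist as accessible quantum systems in the reveal phase, so you cannot define a recovery measurement on $Y^n$. Instead, because $X^n$, $Z$ (equivalently $X^n, U^n, V^n$) determine the quantum state Bob holds, the acceptance probability $F(x^n,u^n,v^n\mid m'z') = \Tr[\cdot\, T_{m'z'}]$ is a classical number Alice can compute; the function $h$ is $\argmax_m \max_{z\in\mathrm{Good}(m)} F(\cdot\mid mz)$. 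The $\sqrt[3]{\delta}$ exponent does not come from gentle measurement: it comes from a Markov-inequality threshold (defining $\mathrm{Good}(m)$ as the $z$'s with $f(m,z)>1-\sqrt[3]{\delta}$, which has probability $\ge 1-\sqrt[3]{\delta^2}$), combined with the passive-binding bound $\mathbb{E}\bar F<\delta$, yielding the error bound $\delta + 2\sqrt[3]{\delta} \le 3\sqrt[3]{\delta}$. This delivers $H(M\mid X^n U^n V^n)\le 1+3n\sqrt[3]{\delta}R$ for \emph{any} protocol.

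Invertibility enters only at a different point: to establish the equality $H(X^n\mid Y^n U^n)[\rho_3] = H(X^n\mid U^n V^n Y'_n)[\rho_2]$ so that the single-letterization $H(X^n\mid Y^n)\le \sum_i H(X_i\mid Y_i)\le nH(X|Y)_{\bar P}$ can be chained to the Fano bound that lives on $(X^n,U^n,V^n)$. You have misplaced where invertibility is used: you attach it to the decoder construction, where the paper does not need it, while your proposal to measure on $Y^n$ is exactly the obstruction the paper is designed to avoid. If you attempted your POVM route, you would hit the non-commutativity of the $T_{mz}$'s (which you yourself flag) with no way out; the paper's move — replace quantum decoding by classical prediction from Alice's side information — is the missing idea.
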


To show Theorem \ref{con}, we prepare the following proposition and the following lemma.
In these statements, we consider various information quantities on the state $\rho_{2}$ after Bob's operation of Phase 2.
The state $\rho_{2}$ has the random variables $M,X^n,U^n, V^n,Z $
and the quantum system $Y'_n$.
Hence, omit the symbol added to information quantities, $[\rho_2]$.
That is, $I(M; U^n V^n Y'_n)[\rho_2]$ is simplified to $I(M; U^n V^n Y'_n)$.

\begin{proposition} 
\Label{Fanos}
We consider a protocol ${\cal P}_n=({\cal A}_1,{\cal A}_2,{\cal B},{\cal T})$,
with $R=R({\cal P}_n) $ and
$\delta=\delta_p({\cal P}_n) $,
and assume that $M$ is chosen uniformly from $[1:2^{nR}]$.
There exists a function $h(X^n U^n V^n)$ such that
the relation 
\begin{align}
\Pr \{M \neq h(X^n U^n V^n)\}
< 3\sqrt[3]{\delta}\Label{XZP}
\end{align}
holds under ${\cal A}_1,{\cal B}$.
Notice that the random variables $M,X^n, U^n, V^n$ are defined at the end of Phase 1
and they are defined with the state $\rho_2$.
\hfill $\square$
\end{proposition}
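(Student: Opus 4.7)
My plan is to define $h$ via a decoder constructed from Bob's acceptance tests, and to control its error with two Markov-type inequalities---one using the correctness bound in \eqref{XLP3} and the other using the binding bound in \eqref{XLP4}. Fix the threshold $\alpha:=\sqrt[3]{\delta}$. For each classical trajectory $t=(x^n,u^n,v^n)$ arising in Phase~1 under honest $\mathcal{A}_1,\mathcal{B}$, let $\rho_t$ denote Bob's conditional post-measurement state on $Y'_n$ given $T=t$; this state is well defined because Bob's instruments act only on the data recorded in $t$ (and the channel outputs from the $x^n$), so it does not depend on $(M,Z)$ beyond what $t$ already records. For each candidate message $m$, set
\[
 S_m(t):=\max_{z\in\mathcal{Z}}\Tr\bigl[T_{mz}\,\rho_t\bigr],
\]
and define $h(t)$ as any $m$ satisfying $S_m(t)\ge 1-\alpha$, with ties broken arbitrarily and a dummy value assigned if no such $m$ exists.

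I would then show that $\{h(T)\neq M\}$ is contained in $E_1\cup E_2$ with $E_1=\{S_M(T)<1-\alpha\}$ and $E_2=\{\exists\,m'\neq M:\,S_{m'}(T)\ge 1-\alpha\}$. The correctness condition gives $\mathbb{E}[1-S_M(T)\mid M]\le\delta$, so Markov's inequality yields $\Pr(E_1)\le\delta/\alpha=\alpha^2$. For $E_2$, I would introduce Alice's adaptive cheating strategy $A_2^\ast$: on observing $T$, she reveals $(m',z^\ast)$ for some $m'\neq M$ with $S_{m'}(T)\ge 1-\alpha$ (taking $z^\ast$ the maximizer), or aborts if no such $m'$ exists. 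Whenever $E_2$ holds, $A_2^\ast$ reveals a wrong message that is accepted with conditional probability at least $1-\alpha$. The binding condition \eqref{XLP4}, applied to $A_2^\ast$ and aggregated over the disjoint events ``$A_2^\ast$ reveals exactly $m'$'' for varying $m'\neq M$, is then used to bound $\Pr(E_2)$ by roughly $2\alpha$. Summing the two contributions gives $\Pr\{h(T)\neq M\}\le\alpha^2+2\alpha<3\alpha=3\sqrt[3]{\delta}$, as required.

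The main obstacle is the step controlling $E_2$: a naive union bound of binding across the $2^{nR}-1$ values of $m'\neq M$ would insert a useless factor $2^{nR}-1$ and destroy the estimate. The correct aggregation must exploit that $A_2^\ast$ is a single strategy whose output is a deterministic function of the trajectory, so the ``per-$m'$'' bounds correspond to disjoint events---the acceptance of $A_2^\ast$ decomposes once, not $2^{nR}$ times---and the total mass is bounded by the single conditional acceptance probability, which is at least $(1-\alpha)\Pr(E_2)$. The invertibility hypothesis on $\mathcal{A}_1$ is used implicitly to ensure $\rho_t$ (and hence $S_m(t)$) is a bona fide function of the classical transcript alone, so that Alice's cheat really can be presented in this ``trajectory-then-reveal'' form. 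This is precisely the construction the paper highlights as absent in the earlier classical-channel arguments \cite{BC1,CCDM,W-Protocols}, and it is what allows Fano's inequality to be applied to derive \eqref{XCI} in Theorem~\ref{con}.
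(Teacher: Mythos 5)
Your decoder $h$ (argmax/threshold over acceptance scores $S_m$) and the two-way split of the error into a correctness-controlled event and a binding-controlled event is exactly the structure of the paper's proof: your $S_m(t)$ plays the role of the paper's $F(x^n,u^n,v^n\mid m)$, your $E_1$ corresponds to $1-F$, your $E_2$ corresponds to $\bar F$, and your resolution of the union-bound obstacle---a single trajectory-dependent cheating strategy whose total acceptance probability is bounded once by binding---is precisely the paper's Step~3 aggregation. (The paper's extra restriction to $z\in\mathrm{Good}(m)$ is an inessential normalization; your version maximizing over all $z$ works just as well, and your $E_2$ bound is actually $\delta/(1-\alpha)=O(\delta)$, considerably better than the ``roughly $2\alpha$'' you wrote, so the final estimate is comfortably below $3\sqrt[3]{\delta}$.)

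One point to correct: your closing remark that invertibility of $\mathcal{A}_1$ is ``used implicitly'' to make $\rho_t$ and $S_m(t)$ functions of the transcript is mistaken. Given $X^n=x^n$, $U^n=u^n$, $V^n=v^n$, Bob's conditional state on $Y'_n$ is determined by his fixed instruments acting on $W_{x_1}\otimes\cdots\otimes W_{x_n}$ together with the recorded outcomes, independently of any invertibility hypothesis on Alice's side, and indeed Proposition~\ref{Fanos} is stated for a general protocol with no invertibility assumption. Invertibility enters only later, in the proof of Theorem~\ref{con}, where the inverse maps $\Lambda$ are needed to pass from $H(X^n\mid U^nV^nY'_n)[\rho_2]$ to $H(X^n\mid Y^nU^n)[\rho_3]$ in \eqref{XL4}.
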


 \begin{lem}
\Label{cont}
Suppose Alice and Bob follow the protocol mentioned in section \ref{prob} such that two different messages $m \neq m'$  satisfy
\begin{equation} 
\frac{1}{2}\bigg{\|}W_{{\cal B},A_1(m)}^{U^nV^n Y'_n} - W_{{\cal B},A_1(m')}^{U^n V^n  Y'_n}\bigg{\|}_1 \leq \eps.\label{ZPR}
\end{equation}
Then, 
\begin{align}
I(M; U^n V^n Y'_n) \leq n\eps R + \eta (\eps), \label{ZOLK}
\end{align}
where 
$\eta(\eps):= (\eps+1)\log(\eps+1)- \eps\log(\eps).$
\hfill $\square$\end{lem}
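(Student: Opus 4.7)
The plan is to deduce \eqref{ZOLK} from the Alicki--Fannes--Winter (AFW) continuity of conditional entropy, applied to the classical register $M$. The hypothesis \eqref{ZPR} only provides pairwise closeness of the states $\rho_m := W_{{\cal B},A_1(m)}^{U^n V^n Y'_n}$, but averaging promotes this to closeness of each $\rho_m$ to the mean $\bar\rho := \sum_m 2^{-nR}\rho_m$: writing $\rho_m - \bar\rho = \sum_{m'} 2^{-nR}(\rho_m - \rho_{m'})$ and using the triangle inequality together with \eqref{ZPR} yields $\tfrac12\|\rho_m - \bar\rho\|_1 \le \eps$ for every $m$.

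Next, I would set up the two classical-quantum states
\begin{align*}
\rho_{M U^n V^n Y'_n} &= \sum_m 2^{-nR}|m\rangle\langle m|\otimes \rho_m, \\
\sigma_{M U^n V^n Y'_n} &= \sum_m 2^{-nR}|m\rangle\langle m|\otimes \bar\rho,
\end{align*}
for which $\tfrac12\|\rho - \sigma\|_1 = \sum_m 2^{-nR}\tfrac12\|\rho_m - \bar\rho\|_1 \le \eps$. Under $\sigma$ the register $M$ is decoupled from $U^n V^n Y'_n$, so $S(M\mid U^n V^n Y'_n)_\sigma = nR$, while the $M$-marginals of $\rho$ and $\sigma$ coincide. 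Applying the AFW continuity bound in its tightened form for a classical register gives
\[
\bigl|S(M\mid U^n V^n Y'_n)_\rho - S(M\mid U^n V^n Y'_n)_\sigma\bigr| \le \eps \log|M| + \eta(\eps),
\]
where the standard AFW error term $(1+\eps)\,h(\eps/(1+\eps))$ coincides with $\eta(\eps)$ by direct computation. Since $\log |M| = nR$, rearranging produces
\[
I(M; U^n V^n Y'_n)_\rho = nR - S(M\mid U^n V^n Y'_n)_\rho \le n\eps R + \eta(\eps),
\]
which is precisely \eqref{ZOLK}.

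The principal obstacle is the coefficient in front of $n\eps R$: a naive application of the generic AFW continuity bound for quantum conditional entropy has a leading coefficient of $2$, producing $2n\eps R$, which would be too weak. The tighter factor of $1$ is crucial and relies on $M$ being classical. An alternative route---expressing $I(M; U^n V^n Y'_n)$ as the Holevo $\chi$-quantity and bounding $|S(\bar\rho) - \sum_m 2^{-nR}S(\rho_m)|$ via Fannes--Audenaert on each entropy---would introduce the dimension of $U^n V^n Y'_n$ and so fail to produce the required $n\eps R$ term; the conditional-entropy formulation above avoids this pitfall entirely.
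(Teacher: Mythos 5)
Your argument is correct, and it follows the same overall strategy as the paper's proof: first promote the pairwise closeness \eqref{ZPR} to closeness $\tfrac12\|\rho_m - \bar\rho\|_1\le\eps$ of each conditional state to the ensemble average via the triangle inequality, and then control the mutual information by a continuity bound. The paper then simply writes $I(M;U^nV^nY'_n)$ in its Holevo form $S(\bar\rho)-\sum_m 2^{-nR}S(\rho_m)$ and cites the ``Fannes inequality'' (together with a theorem from Hayashi's textbook) to finish. You instead pass to the conditional-entropy form $I(M;B)=nR-H(M\mid B)$ and invoke the Alicki--Fannes--Winter bound for a \emph{classical} register $M$, which has leading coefficient $1$ rather than $2$ and error term $(1+\eps)h(\eps/(1+\eps))=\eta(\eps)$; you also carefully construct the product state $\sigma$ with $\tfrac12\|\rho-\sigma\|_1\le\eps$ so that $H(M\mid B)_\sigma=nR$. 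This is the right technical route: as you correctly point out at the end, a naive application of Fannes to each entropy $S(\rho_m)-S(\bar\rho)$ in the Holevo form would produce a bound scaling with $\log\dim(U^nV^nY'_n)$ rather than with $\log|M|=nR$, which would not yield \eqref{ZOLK}. So your proof is not merely a rederivation --- it also supplies the explicit form of the continuity bound that makes the paper's terse citation work, which is a genuine clarification of the argument.
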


\begin{proofof}{Lemma \ref{cont}}
Since 
$W_{{\cal B},{\cal A}_1}^{U^nV^n Y'_n}
=\sum_{m'=1}^{2^{nR}} \frac{1}{2^{nR}}
W_{{\cal B},A_1(m')}^{U^nV^n Y'_n}$,
the application of the triangle inequality implies
\begin{align}
\frac{1}{2}\|W_{{\cal B},A_1(m)}^{U^nV^n Y'_n} - W_{{\cal B},{\cal A}_1}^{U^nV^n Y'_n}\|_1 
\le \sum_{m'=1}^{2^{nR}}
\frac{1}{2\cdot 2^{nR}}\|
W_{{\cal B},A_1(m)}^{U^nV^n Y'_n}- W_{{\cal B},A_1(m')}^{U^nV^n Y'_n}\|_1 
\leq \eps,
\end{align}
where the second inequality follows from \eqref{ZPR}.
Then, since $$I(M; U^n V^n Y'_n) =
S(W_{{\cal B},{\cal A}_1}^{U^nV^n Y'_n})-
\sum_{m=1}^{2^{nR}} \frac{1}{2^{nR}}
S(W_{{\cal B},A_1(m)}^{U^nV^n Y'_n}),$$
\eqref{ZOLK} follows from the continuity property of 
von Neumann entropy, i.e., Fannes inequality \cite{Fannes}, \cite[Theorem 5.12]{Hbook}.
\end{proofof}

Now, we prove Theorem \ref{con} by using Proposition \ref{Fanos} and Lemmas \ref{cont} and \ref{XLPLL}. 
\begin{proofof}{Theorem \ref{con}}
Applying Fano's inequality to Proposition \ref{Fanos}, we have the relations
\begin{align}
H(M \mid X^n U^nV^n Y'_n) 
\le H(M \mid  X^n U^n V^n)
\le 1 + 3n\sqrt[3]{\delta}R. \Label{Fano2}
\end{align}
Then, we have 
\begin{align}
 &H(X^n\mid U^n V^n Y'_n) \nonumber\\
 &= H(MX^n \mid U^n V^n Y'_n) - H(M \mid X^n U^n V^n Y'_n) \nonumber\\
 & \geq  H(M \mid U^n V^n Y'_n) - H(M \mid X^n U^n V^n Y'_n) \nonumber\\
 & \overset{(a)} \geq  H(M \mid U^n V^n Y'_n) -1 - 3n\sqrt[3]{\delta}R \nonumber\\
 & =  H(M) - I(M; U^n V^n Y'_n) - 1- 3n\sqrt[3]{\delta} R\nonumber\\
 & \overset{(b)} \geq H(M) - n\eps R - \eta(\eps) -1 -  3n\sqrt[3]{\delta}R\nonumber\\
&= nR(1-\eps- 3\sqrt[3]{\delta}) - (1+\eta (\eps)),\Label{XL1}
\end{align}
where $(a)$ follows from \eqref{Fano2}, and $(b)$ follows from Lemma \ref{cont}. 

Next, we consider the following virtual operation on the state $\rho_2$.
We consider the state $\rho_3:=
\Lambda_{Y'_1V_1 \to U_1 Y_1}\circ \Lambda_{Y'_2V_2 \to Y'_1 U_2 Y_2 }\circ
\ldots\circ \Lambda_{Y'_nV_n \to Y'_{n-1} U_n Y_n }(\rho_2)$.
Under the state $\rho_3$, Bob's system is composed of quantum system $Y^n$
and the classical variable $U^n $.
Then, we have 
\begin{align}
H(X^n \mid Y^n)[\rho_3] \geq 
H(X^n \mid Y^n  U^n )[\rho_3]. \Label{XL2}
\end{align}
Also, we have
\begin{align}
& H(X^n \mid Y^n)[\rho_3] \nonumber\\
&=\sum_{i=1}^n H(X_i | X^{i-1}  Y^n)[\rho_3]\nonumber \\
& \le  \sum_{i=1}^n H(X_i  \mid Y_i)[\rho_3] \nonumber \\
&= \sum_{i=1}^n H(X| Y)_{P_{X_i}} \nonumber \\
& \le n H(X|Y)_{ \sum_{i=1}^n \frac{1}{n}P_{X_i}}.\Label{XL3}
\end{align}
where the last inequality follows from Lemma \ref{XLPLL}.
Since the operation 
$\Lambda_{Y'_1V_1 \to U_1 Y_1}\circ \Lambda_{Y'_2V_2 \to Y'_1 U_2 Y_2 }\circ
\ldots\circ \Lambda_{Y'_nV_n \to Y'_{n-1} U_n Y_n }$ is invertible,
Bob's information under the state $\rho_3$ is equivalent to 
Bob's information under the state $\rho_2$.
Hence, we have
\begin{align}
H(X^n \mid Y^n  U^n ) [\rho_3] = H(X^n\mid U^n V^n Y'_n) [\rho_2].
\Label{XL4}
\end{align}
Therefore, we have
\begin{align}
& n H(X|Y)_{ \sum_{i=1}^n \frac{1}{n}P_{X_i}} \nonumber\\
&\overset{(a)}\ge H(X^n \mid Y^n)[\rho_3] \nonumber\\
&\overset{(b)}\ge H(X^n \mid Y^n  U^n )[\rho_3] \nonumber\\
&\overset{(c)}=H(X^n\mid U^n V^n Y'_n) \nonumber\\
&\overset{(d)}\ge
nR(1-\eps- 3\sqrt[3]{\delta}) - (1+\eta (\eps)),
\end{align}
where Steps $(a)$, $(b)$, $(c)$, and $(d)$ follow from
\eqref{XL3}, \eqref{XL2}, \eqref{XL4}, and \eqref{XL1}, respectively.
Thus, we obtain \eqref{XCI}.
\end{proofof}

\if0
\subsection{Proof of \eqref{MO4}}
We apply the wiretap channel model with side information $U^n V^n$.
The legitimate receiver is Bob who behaves honestly in Phase 1 and receives $X^n$.
The adversarial receiver is Bob who behaves honestly in Phase 1 and receives $X^n$.
 \fi

\subsection{Proof of Proposition \ref{Fanos}}
%
{\bf{Outline:}} In this proof, we discuss $\Pr \{M \neq h(X^n U^n V^n)\}$
when Alice behaves honestly in Phase 1 as ${\cal A}_1$ and 
Bob behaves honestly in Phase 1 as ${\cal B}$.
Hence, we omit the symbol ${\cal B}$ in the state description. We construct the required function $h(X^n U^n V^n)$ by following a sequence of steps. Here we give a brief  outline. For every $m,$ we first show the existence of a set which we call as $\mbox{Good}(m) \subseteq \mathcal{Z},$ where for every $z\in \mbox{Good}(m)$, the pair (m,z) has the property that the test $\mathcal{T} = \{T_{mz}, I- T_{mz}\}$ accepts the state $W_{{\cal A}_1(m),Z=z}^{U^n V^n Y'_n}$ with high probability. We then define functions $F(\cdot \mid m)$ and $\bar{F}(\cdot \mid m).$ Using these functions we define the function $h(\cdot).$ We then invoke the properties of the set $\mbox{Good}(m)$ and the functions $F(\cdot \mid m)$ and $\bar{F}(\cdot \mid m)$ to arrive at \eqref{XZP}. 

\noindent{\bf Step 1:}\quad 
The aim of this step is to define the set $\mbox{Good}(m)$ and deriving its properties.
Towards this, let
$f(m,z):= \Tr W_{{\cal A}_1(m),Z=z}^{U^n V^n Y'_n}T_{mz}$.
The {}{correctness} condition implies
\begin{align}
&\sum_{z} P_{Z|M=m}(z) f(m,z) \nonumber \\
=&
\Pr\left\{ m \mbox{ is accepted}\mid \mbox{Alice performs } 
{A}_1(m),{A}_2(m), \mbox{ Bob performs }{\cal B},{\cal T}
\right\} \nonumber \\
\geq &1- \delta.
\Label{comp}
\end{align}
  For each $m,$ define the set $\mbox{Good}(m)$ as follows: 
\begin{equation}
\Label{good}
\mbox{Good}(m):= \{z| f(m,z) > 1-\sqrt[3]{\delta}\}. 
\end{equation}
The existence of $\mbox{Good}(m)$ follows from the following set of inequalities: 
\begin{align}
&\Pr\left\{\mbox{Good}(m)\right\}\nonumber\\
&=\Pr\{f(Z,m) > 1-\sqrt[3]{\delta}\} \nonumber\\
&=1-\Pr\{f(Z,m) \leq 1-\sqrt[3]{\delta}\} \nonumber\\
&=1-\Pr\{1-f(Z,m) \ge \sqrt[3]{\delta}\} \nonumber\\
&\overset{(a)}\geq 1- \frac{\mathbb{E}_{Z|M=m}[ 1-f(Z,m)]}{\sqrt[3]{\delta}}  \nonumber\\
&\overset{(b)} \geq 1-\sqrt[3]{{\delta^2}},
\Label{good1}
\end{align}
where $(a)$ follows from Markov inequality and $(b)$ follows from \eqref{comp}. 
The relation \eqref{good1} guarantees that 
$\mbox{Good}(m)$ is a non-empty set.

\noindent{\bf Step 2:}\quad
The aim of Step 2 is introducing the functions $F(x^n,u^n,v^n  |mz)$, $F(x^n,u^n,v^n  |m)$
and deriving their properties.
Since the state on $U^n,V^n,Y_n'$ is determined with $X^n=x^n,Z=z$,
we use the notation, $W^{U^n=u^n,V^n=v^n,Y'_n}_{X^n=x^n,Z=z}$.
Using the operator;
\begin{align}
W^{Y'_n}_{U^n=u^n,V^n=v^n,X^n=x^n,Z=z}:=
\frac{1}{\Tr W^{U^n=u^n,V^n=v^n,Y'_n}_{X^n=x^n,Z=z}}
W^{U^n=u^n,V^n=v^n,Y'_n}_{X^n=x^n,Z=z},
\end{align}
we define the functions 
\begin{align}
F(x^n,u^n,v^n  |mz):=&
\Tr (
W^{Y'_n}_{U^n=u^n,V^n=v^n,X^n=x^n,Z=z}
\otimes | u^n,v^n\rangle \langle u^n,v^n|) T_{mz} \\
F(x^n,u^n,v^n  |m):=&
\max_{z \in \mbox{Good}(m)} 
F(x^n,u^n,v^n  |mz).\Label{XMP}
\end{align}

As shown in Step 6, we have
\begin{equation}
\mathbb{E}_{X^n U^n V^n|M=m,Z=z}{F}(X^n U^n V^n  |m'z' )
=
\Tr 
W^{U^n,V^n,Y'_n}_{{\cal A}_1(m),Z=z} T_{m'z'} 
\Label{claim2}
\end{equation}
for $m,m',z,z'$.
Then, we have
\begin{align}
\mathbb{E}_{X^n U^n V^n|M=m,Z=z}F(X^n U^n V^n  |mz)
=f(m,z).\Label{XPZ}
\end{align}
Therefore, we have
\begin{align}
& \mathbb{E}_{X^n U^n V^n|M=m}F(X^n U^n V^n  |m) \nonumber\\
&\ge
\sum_{z \in \mbox{Good}(m)}  P_{Z|M=m}(z) \mathbb{E}_{X^n U^n V^n|M=m,Z=z}F(X^n U^n V^n  |m)
\nonumber\\
&\overset{(a)}
\ge  \sum_{z \in \mbox{Good}(m)} P_{Z|M=m}(z) 
\mathbb{E}_{X^n U^n V^n|M=m,Z=z}F(X^n U^n V^n  |mz)
\nonumber\\
&\overset{(b)}
= \sum_{z \in \mbox{Good}(m)} P_{Z|M=m}(z)f(m,z) \nonumber\\
&\overset{(c)}
> \sum_{z \in \mbox{Good}(m)} P_{Z|M=m}(z)
(1-\sqrt[3]{\delta}) \nonumber\\
&\overset{(d)}
\ge(1-\sqrt[3]{{\delta^2}})(1-\sqrt[3]{\delta}) \nonumber\\
&>  1 - 2\sqrt[3]{\delta},
\Label{b1}
\end{align}
where Steps $(a)$, $(b)$, $(c)$, and $(d)$ follow from 
\eqref{XMP}, \eqref{XPZ}, \eqref{good}, and \eqref{good1}, respectively.

\noindent{\bf Step 3:}\quad
The aim of Step 3 is introducing the function $\bar{F}(x^n,u^n,v^n  |m)$, 
and deriving its property.
We define 
$\bar{F}(x^n,u^n,v^n  |m):=\max_{m'\neq m}F(x^n,u^n,v^n  |m')$.
Then, as shown below, we have
\begin{equation}
\Label{claim}
\mathbb{E}_{X^n U^n V^n|M=m}\bar{F}(X^n, U^n, V^n  |m)
  <\delta.
\end{equation}

To show \eqref{claim}, we choose 
$m'$ and $z' \in \mbox{Good}(m)$ as $\bar{F}(X^n U^n V^n  |m)=
{F}(X^n U^n V^n  |m')$ and 
${F}(X^n U^n V^n  |m')$ $={F}(X^n U^n V^n  |m'z')$.
We define ${A}_2'$ as
Alice's dishonest operation in Phase 2 to send $m'z'$ instead of $mz$.
Hence, we have
\begin{align}
&\mathbb{E}_{X^n U^n V^n|M=m}\bar{F}(X^n U^n V^n  |m)
\overset{(a)}=
\mathbb{E}_{X^n U^n V^n|M=m}{F}(X^n U^n V^n  |m')\nonumber \\
=&
\sum_{z}P_{Z|M=m}(z)
\mathbb{E}_{X^n U^n V^n|M=m,Z=z}{F}(X^n U^n V^n  |m')\nonumber \\
\overset{(b)}
=&
\sum_{z}P_{Z|M=m}(z)
\mathbb{E}_{X^n U^n V^n|M=m,Z=z}{F}(X^n U^n V^n  |m'z' )\nonumber \\
\overset{(c)}
=&
\sum_{z}P_{Z|M=m}(z)
\Tr 
W^{U^n,V^n,Y'_n}_{{\cal A}_1(m),Z=z} T_{m'z'} \nonumber \\
\overset{(d)} = &\Pr\{m' \mbox{ is accepted}\mid \mbox{Alice performs }  
{A}_1(m),{A}_2', \mbox{ Bob performs }{\cal B},{\cal T}  \} 
\overset{(e)}
\leq \delta ,
\end{align}
where Steps
$(a)$, $(b)$, $(c)$, $(d)$, and $(e)$, 
follow from the choice of $m'$, the choice of $z'$,
\eqref{claim2}, 
the definition of $A_2'$,
and \eqref{XLP4}, respectively.

\noindent{\bf Step 4:}\quad
The aim of Step 4 is introducing the function $\hat{M}(x^n, u^n, v^n)$, 
and deriving its property.
We define a function $\hat{M}(x^n, u^n, v^n)$ as follows.
\begin{align}
h(x^n, u^n, v^n):=\argmax_{m} F(x^n,u^n,v^n  |m).\Label{NCP}
\end{align}
When $m,x^n ,u^n, v^n$ satisfies the condition,
$ m \neq \hat{M}(x^n, u^n ,v^n)$,
\eqref{NCP} guarantees that
\begin{align}
{F}(x^n,u^n,v^n  |m) < \max_{m'} F(x^n,u^n,v^n  |m'),
\end{align}
which implies that
\begin{align}
\bar{F}(x^n,u^n,v^n  |m) =\max_{m'} F(x^n,u^n,v^n  |m').
\end{align}
Therefore, we have
$F(x^n,u^n,v^n  |m) \le \bar{F}(x^n,u^n,v^n  |m)$,
which implies that
\begin{align*}
1 \le \bar{F}(x^n,u^n,v^n  |m)+ (1-F(x^n,u^n,v^n  |m)).
\end{align*}
Hence, we have
\begin{align}
I[h( X^n, U^n, V^n) \neq m] \le \bar{F}(x^n,u^n,v^n  |m)+ (1-F(x^n,u^n,v^n  |m)),
\Label{XOZ}
\end{align}
where the function $I[h( X^n, U^n, V^n) \neq m] $ is defined as
\begin{align*}
I[h( X^n, U^n, V^n) \neq m]:=
\left\{
\begin{array}{ll}
1 & \hbox{ when }h( X^n, U^n, V^n) \neq m; \\
0 & \hbox{ when }h( X^n, U^n, V^n) = m .
\end{array}
\right.
\end{align*}

\noindent{\bf Step 5:}\quad
The aim of Step 5 is showing the desired relation \eqref{XZP}.
We have
\begin{align}
&\Pr\{h( X^n, U^n, V^n) \neq m\} \nonumber\\
&=\mathbb{E}_{X^n U^n V^n|M=m}
I[h( X^n, U^n, V^n) \neq m]
 \nonumber \\
&\overset{(a)} \le 
\mathbb{E}_{X^n U^n V^n|M=m}
(\bar{F}(X^n,U^n,V^n  |m)+ (1-F(X^n,U^n,V^n  |m)))
 \nonumber \\
&= 
\mathbb{E}_{X^n U^n V^n|M=m}(\bar{F}(X^n,U^n,V^n  |m))
+
\mathbb{E}_{X^n U^n V^n|M=m}(1-F(X^n,U^n,V^n  |m)))\nonumber\\
&\overset{(b)}
< \delta +2\sqrt[3]{\delta},
\end{align}
where Steps $(a)$ and $(b)$ follow from \eqref{XOZ} and
the combination of \eqref{b1} and \eqref{claim}, respectively.

Hence, we have
\begin{align}
\Pr (M \neq h(X^n U^n V^n))
= \mathbb{E}_M\Pr \{h(X^n U^n V^n) \neq m\}
< \delta +2\sqrt[3]{\delta}\le 3 \sqrt[3]{\delta},
\end{align}
which proves the desired statement \eqref{XZP}. 

\noindent
{\bf Step 6:}\quad
The claim in \eqref{claim2} can be shown as follows.
\begin{align}
&\mathbb{E}_{X^n U^n V^n|M=m,Z=z}{F}(X^n U^n V^n  |m'z' )\nonumber \\
=&
\mathbb{E}_{X^n U^n V^n|M=m,Z=z}\Tr (
W^{Y'_n}_{U^n,V^n,X^n,Z=z}
\otimes | U^n,V^n \rangle \langle U^n,V^n |) T_{m'z'} \nonumber \\
=&
\mathbb{E}_{X^n|M=m,Z=z}
\sum_{u^n,v^n}
\Tr W^{U^n=u^n,V^n=v^n,Y'_n}_{X^n,Z=z}
\Tr (
W^{Y'_n}_{u^n,v^n,X^n,Z=z}
\otimes | u^n,v^n \rangle \langle u^n,v^n |) T_{m'z'} \nonumber \\
=&
\mathbb{E}_{X^n|M=m,Z=z}
\sum_{u^n,v^n}
\Tr (
W^{U^n=u^n,V^n=v^n,Y'_n}_{X^n, Z=z}
\otimes | u^n,v^n \rangle \langle u^n,v^n |) T_{m'z'} \nonumber \\
=&
\mathbb{E}_{X^n|M=m,Z=z}
\Tr 
W^{U^n,V^n,Y'_n}_{{\cal A}_1(m),X^n,Z=z} T_{m'z'} \nonumber \\
=&
\Tr 
W^{U^n,V^n,Y'_n}_{{\cal A}_1(m),Z=z} T_{m'z'} .
\end{align}
\endproof


%

\subsection{Relation to existing converse part analyses}\label{rem1}
{}{The proof partially follows techniques similar to those used in the papers\cite{BC1,CCDM,W-Protocols}
because these studies used a statement similar to
Proposition \ref{Fanos}.
However, 
our Proposition \ref{Fanos} is different from the 
the corresponding statement in the papers \cite{BC1,CCDM,W-Protocols}.
In Proposition \ref{Fanos}, the estimate of the message $M$ is given as the function $h$ of 
$X^n U^n V^n$.
That is, the channel outputs $(Y_1, \ldots, Y_n)$ are not the input variables of our function $h$
because they do not exist in Phase 2 (Reveal phase) in our quantum setting.
In contrast, the papers \cite{BC1,CCDM,W-Protocols} used the variables 
$U^n V^n$ and $(Y_1, \ldots, Y_n)$
as the inputs of the estimate of the message $M$
because 
the channel outputs $(Y_1, \ldots, Y_n)$ exists even in Phase 2 (Reveal phase) in the classical setting.
Due to the above reason, we need to invent an estimation function $h$ different from their method.}

The paper \cite{BC3} also considered the converse part of in the classical setting
only with non-interactive protocols.
However, to derive the converse part, the paper \cite{BC3} assumes that 
Bob can recover the original message $M$ only with 
the received information via noisy channel and $Z$.
That is, the paper \cite{BC3} did not prove a statement corresponding to Proposition \ref{Fanos}.
In fact, if we show Proposition \ref{Fanos},
this method works for the converse part of non-interactive protocols, i.e, $C_{p,non}(\bW)$.
In this case, the converse part can be shown by the application of wiretap channel to 
the case when the main channel is 
 the noiseless communication from Alice to Bob 
and the wiretap channel is the channel to the output of which is accessible to Bob in Phase 1.
In addition, even when Proposition \ref{Fanos} is employed,
the simple wiretap scenario does not work in interactive setting
because the side information $V^n,U^n$ cannot be handled in 
the simple wiretap scenario.

\section{Direct part}
\label{achievable}
\subsection{Coding-theoretic formulation for non-interactive protocol}\Label{S2-1B}
To study the performance of non-interactive protocol, we formulate 
a code for a cq-channel $\bW$.
A map $\phi$ from ${\cal M}\times {\cal L}$
to ${\cal X}$ is called a encoder,
where
${\cal M}:= \{1, \ldots, \sM\}$
and 
${\cal L}:= \{1, \ldots, \sL\}$.
When Alice's message is $M \in {\cal M}$,
she selects $L \in {\cal L}$ according to the uniform distribution and 
sends $\phi(M,L)$ via a cq channel $\bW$.
Bob's verifier is 
$D=\{{\cal D}_{m,l}\}_{(m,l) \in {\cal M}\times {\cal L}
}$, where
$0 \le {\cal D}_{m,l}\le I$. 
A pair $(\phi,D)$ of an encoder and 
a verifier is called a code.

We introduce 
the parameters (A) -- (C) for an encoder $\phi$ 
and a verifier $D=\{{\cal D}_{m,l}\}_{(m,l)\in {\cal M}\times {\cal L}}$
as follows.
\begin{description}
\item[(A)] 
Verifiable condition.
\begin{align}
\eps_{A}(\phi,D)
&:=\max_{(m,l) \in {\cal M}\times {\cal L}}
\eps_{A,m,l}(\phi(m,l),D) \Label{XMP1}
 \\
\eps_{A,m,l}(x,D) 
&:=
1-\Tr [W_{x}{\cal D}_{m,l}].  
\end{align}

\item[(B)]
Concealing condition 
\begin{align}
\delta_B(\phi):= 
\max_{m,m' \in {\cal M}}
\Big\|
\sum_{l=1}^{\sL} \frac{1}{\sL}W_{\phi(m,l)}
-\sum_{l'=1}^{\sL} \frac{1}{\sL}W_{\phi(m',l')}
\Big\|_1.\Label{ZXL3}
\end{align}

\item[(C)] Binding condition.
For $x\in {\cal X}$, we define the quantity
$\delta_{C,x}(D)$ as the second largest value among 
$\{(1-\eps_{A,m,l}(x,D))\}_{(m,l) \in \sM\times \sL}$.
Then, we define
\begin{align}
\delta_{C}(D)&:=
\max_{x \in {\cal X}} \delta_{C,x}(D).
\Label{HB3E}
\end{align}
\end{description}

For a code $(\phi,D)$,  we define two numbers $|(\phi,D)|_1:= \sM$
and $|(\phi,D)|_2:= \sL$.

To construct a non-interactive protocol, we consider 
$n$ use of the cq-channel, which is written as a cq-channel
$\bW^n:=\{ W^{(n)}_{x^n}\}_{x^n \in {\cal X}^n}$, where
\begin{align}
W^{(n)}_{x^n}:= W_{x_1}\otimes \cdots \otimes W_{x_n}
\end{align}
with $x^n=(x_1, \ldots, x_n)$.
Given a code $(\phi_n,D_n)$ for the cq-channel $\bW^n$,
we construct a non-interactive protocol  with $n$ use of the channel $\bW$ as follows.
In Phase 1,
Alice chooses the random variable $Z$ as the uniform random variable $L\in {\cal L}$.
Given the message $M$, Alice chooses $X^n$ to be $\phi_n(M,L)$, and sends it to Bob via
the cq-channel $\bW^n$.
Bob receives the state $W^n_{X^n}$.
In Phase 2, Alice sends $M$ and $L$ to Bob.
Bob applies the measurement $\{{\cal D}_{M,L},I-{\cal D}_{M,L}\}$.
When Bob's outcome corresponds to ${\cal D}_{M,L}$, he accepts the message $M$.
Otherwise, he rejects it. 
This protocol accomplishes commitment instead of secrecy. 
We denote the above non-interactive protocol by ${\cal P}(\phi_n,D_n)$.
Remember that the active concealing parameter $\eps_a({\cal P})$
and the active binding parameter $\delta_a({\cal P})$
are defined for a protocol ${\cal P}$ in the end of Section \ref{prob}.
Then, we have the following lemma.
\begin{lem}\Label{CLP}
The relations
\begin{align}
\eps_a({\cal P}(\phi_n,D_n)) &= \delta_B(\phi_n) \Label{ZLP1}\\
\delta_a({\cal P}(\phi_n,D_n))&= \max (\eps_A(\phi_n,D_n) , \delta_C(D_n) \Label{ZLP2}
\end{align}
hold.
\hfill $\square$
\end{lem}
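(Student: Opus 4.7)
\begin{proofof}{Lemma \ref{CLP} (sketch)}
The plan is to unpack the non-interactive protocol ${\cal P}(\phi_n,D_n)$ and identify each protocol-level parameter with the corresponding code-level quantity.

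For \eqref{ZLP1}, I would first observe that no classical feedback $U^n,V^n$ is exchanged in Phase~1, so Alice's honest Phase~1 marginal on Bob's system is the mixture $\rho_m:=\sum_{l=1}^{\sL}\tfrac{1}{\sL}W^{(n)}_{\phi_n(m,l)}$, and the corresponding state for an active Bob is $\mathcal{B}'(\rho_m)$, where $\mathcal{B}'$ is Bob's arbitrary CPTP map acting on the channel output. Contractivity of the trace norm under CPTP maps then gives $\|\mathcal{B}'(\rho_m)-\mathcal{B}'(\rho_{m'})\|_1\le\|\rho_m-\rho_{m'}\|_1$, with equality at $\mathcal{B}'=\mathrm{id}$; maximising over $(m,m')$ reproduces $\delta_B(\phi_n)$ (up to the $\tfrac{1}{2}$ normalisation built into \eqref{XLPA}).

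For \eqref{ZLP2}, I would split the analysis into correctness and binding. Honest execution with uniform $L$ followed by the revelation $(m,L)$ gives Bob acceptance probability $\tfrac{1}{\sL}\sum_l\bigl(1-\eps_{A,m,l}(\phi_n(m,l),D_n)\bigr)\ge 1-\eps_A(\phi_n,D_n)$ for every $m$, so \eqref{XLP3} holds at $\delta=\eps_A(\phi_n,D_n)$. For binding, any cheating Alice reduces---after purification and convex combination---to preparing a deterministic $x^n$ in Phase~1 and announcing a pair $(m,l)$ in Phase~2; Bob's acceptance probability is then $p_{m,l}(x^n):=1-\eps_{A,m,l}(x^n,D_n)$, and the best she can do to have ``$m$ accepted'' is to pick the $l$ maximising this quantity. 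In the relevant regime $\delta<1/2$ and $\delta\ge\delta_C(D_n)$: if $\Pr\{m \text{ accepted}\}\ge 1-\delta$, then some $l^\star$ satisfies $p_{m,l^\star}(x^n)\ge 1-\delta>1/2$, which strictly exceeds the second-largest element of $\{p_{m',l'}(x^n)\}_{m',l'}$ (bounded by $\delta_C(D_n)\le\delta$); hence $(m,l^\star)$ is the unique argmax and every $(m',l')$ with $m'\ne m$ satisfies $p_{m',l'}(x^n)\le\delta_{C,x^n}(D_n)\le\delta_C(D_n)\le\delta$, which is exactly \eqref{XLP6}. This yields $\delta_a({\cal P}(\phi_n,D_n))\le\max(\eps_A(\phi_n,D_n),\delta_C(D_n))$, and the matching lower bound follows by taking the honest execution that saturates $\eps_A(\phi_n,D_n)$ in \eqref{XLP3} and the worst-case $x^n$ and runner-up pair saturating $\delta_C(D_n)$ as an explicit binding attack.

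The main obstacle will be the binding step: $\delta_C(D_n)$ is defined as the second largest of $p_{m,l}(x^n)$ over all pairs jointly, whereas the binding condition restricts $m'\ne m$ only. Closing this gap relies on the regime $\delta<1/2$, which (since $1-\delta>\delta$) forces the top acceptance probability to be unique and located at the committed message $m$. Beyond this uniqueness argument the remainder of the proof is routine definition-chasing.
\end{proofof}
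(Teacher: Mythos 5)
Your handling of \eqref{ZLP1} is fine and actually more careful than the paper's one-line definition-chasing (though note that $\delta_B$ in \eqref{ZXL3} has no $\tfrac12$ prefactor while \eqref{XLPA} does, so strictly $\eps_a=\tfrac12\delta_B$; you flagged this, the paper does not). The genuine gap is in \eqref{ZLP2}, specifically the claim that ``any cheating Alice reduces --- after purification and convex combination --- to preparing a deterministic $x^n$.'' That reduction is not valid here. The binding requirement is an \emph{implication}: $\Pr\{m\text{ acc}\}\ge 1-\delta\ \Rightarrow\ \Pr\{m'\text{ acc}\}\le\delta$. Both sides are affine in Alice's Phase~1 distribution over $x^n$, but the set of $A_1$ satisfying ``$f(A_1)<1-\delta$ or $g(A_1)\le\delta$'' is a union of half-spaces and hence not convex, so the worst case need not sit at an extreme (deterministic) point. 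Concretely, if there are inputs $a,b$ and messages $m\neq m'$ with $\max_l\Tr W_a\mathcal{D}_{m,l}=1$, $\max_l\Tr W_b\mathcal{D}_{m',l}=1$, and second-largest values equal to $\delta_C$ at both $a$ and $b$, then mixing $a$ and $b$ with weight $p\approx\tfrac{1-2\delta_C}{1-\delta_C}$ gives $\Pr\{m\}\ge 1-\delta_C$ while $\Pr\{m'\}\approx 2\delta_C>\delta_C$, violating the implication at $\delta=\delta_C$ even though it holds pointwise for every deterministic $x^n$. The best bound you can push through via the pointwise inequality $q_m(x^n)+q_{m'}(x^n)\le 1+\delta_{C,x^n}$ is $\Pr\{m'\}\le\delta_C+\delta$, i.e.\ a factor-of-two slack, not the claimed equality.

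To be fair, the paper's own proof is even terser --- it simply asserts that \eqref{XLP5} and \eqref{XLP6} ``hold by replacing'' $\delta$ with $\eps_A$ and $\delta_C$ respectively, and does not address the mixture attack at all --- so your write-up actually exposes where the subtlety lives. For the asymptotic statement that the paper really uses ($\delta_a({\cal P}(\phi_n,D_n))\to 0$ iff both $\eps_A\to 0$ and $\delta_C\to 0$), the weaker bound $\delta_a\le\delta_C+\max(\eps_A,\delta_C)\le 2\max(\eps_A,\delta_C)$ suffices and can be proved cleanly without the deterministic reduction, via the averaging argument above. Your sketched ``matching lower bound'' is also under-justified: committing to the worst-case $x^n$ and revealing the argmax pair does not obviously trigger the antecedent \eqref{XLP5} unless the argmax value itself is $\ge 1-\delta$, so $\delta_a\ge\delta_C$ is not established. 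I would recommend replacing the equality by a two-sided bound up to a constant and dropping the reduction-to-deterministic claim.
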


\begin{proof}
Since the quantity $\delta_B(\phi_n)$ is defined by \eqref{ZXL3},
the condition \eqref{XLPA} holds by replacing $\varepsilon$ by $\delta_B(\phi_n)$. Hence, we have \eqref{ZLP1}.

Since the quantity $\eps_A(\phi_n,D_n)$ is defined by \eqref{XMP1},
the condition \eqref{XLP5} holds by replacing $\delta$ by $\eps_A(\phi_n,D_n)$. 
Since the quantity $\delta_C(D_n)$ is defined by \eqref{HB3E},
the condition \eqref{XLP6} holds by replacing $\delta$ by $\delta_C(D_n)$. 
Hence, we have \eqref{ZLP2}.
\end{proof}

Therefore, to make a non-interactive protocol, it is sufficient to make the above type of code.

To construct a code, we introduce a pre-encoder and a pre-verifier, which are useful for this construction.
A map $\phi$ from $\tilde{{\cal M}}:=\{1, \ldots, \tilde{\sM}\}$
to ${\cal X}$ is called a pre-encoder.
We define $|\phi|:=\tilde{\sM}$.
Bob's verifier is 
$D=\{{\cal D}_{m}\}_{m \in \tilde{{\cal M}}
}$, where
$0 \le {\cal D}_{m}\le I$. 
We introduce 
the conditions (a) ,(b$\alpha$), and (c) for an pre-encoder $\phi$ 
and a pre-verifier $D=\{{\cal D}_{m}\}_{m \in \tilde{{\cal M}}}
$
as follows.
\begin{description}
\item[(a)] 
Verifiable condition.
\begin{align}
\eps_{A}({\phi},D)
&:=\max_{{m} \in \tilde{{\cal M}}}
\eps_{A,{m}}({\phi}({m}),D) 
\le  \eps_A \\
\eps_{A,{m}}(x,D) 
&:=
1-\Tr [W_{x}{\cal D}_{{m}}].  
\end{align}
\item[(b$\alpha$)]
R\'{e}nyi equivocation type of concealing condition of order $\alpha>1$.
\begin{align}
E_\alpha(\phi):=\log \tilde{\sM} - 
\min_{\sigma \in {\cal S}({\cal H}_Y)}
\frac{1}{\alpha-1}\log 
\sum_{m=1}^{\tilde{\sM}}
\frac{1}{\tilde{\sM}} 2^{(\alpha-1)\tilde{D}_\alpha(W_{\phi(m)} \|  \sigma)}.
\Label{HI3}
\end{align}

\item[(c)] Binding condition. 
For $x\in {\cal X}$, we define the quantity
$\delta_{C,x}(D)$ as the second largest value among 
$\{(1-\eps_{A,{m}}(x,C))\}_{{m} \in 
{\cal M}}$. We define
\begin{align}
\delta_{C}(D)&:=
\max_{x \in {\cal X}} \delta_{C,x}(D).
\end{align}
\end{description}

We can easily show that
\begin{align}
\tilde{H}_\alpha( \tilde{M}|Y)=E_\alpha(\phi) \Label{NNM}.
\end{align}

\subsection{Asymptotic analysis for non-interactive protocol}
Now, we show the inequality \eqref{MO2}, i.e.,
the existence of the non-interactive protocol to achieve the rate $\sup_{P_X} H(X|Y)_{P_X}$.
For this aim, we discuss a sequence of codes $ \{(\phi_n,D_n)\}$.
We say that 
a sequence of codes $ \{(\phi_n,D_n)\}$
is secure when 
$\eps_{A}(\phi_n,D_n)\to 0$,
$\delta_{B}(\phi_n)\to 0$, and
$\delta_{C}(D_n)\to 0$.
Then, we have the following theorem.

\begin{thm}\Label{TH1}
Assume Condition (NR).
For any distribution $P\in \bP({\cal X})$,
there exists a secure sequence codes $ \{(\phi_n,D_n)\}$
with 
$\sM_n:=|(\phi_n,D_n)|_1=2^{nR_1}$ and
$\sL_n:=|(\phi_n,D_n)|_2=2^{nR_2}$
when there exists a distribution $P$ on ${\cal X}$ such that
\begin{align}
R_1+R_2 <H(X)_P, ~ R_2 >I(X;Y)_P.\Label{XM1}
\end{align}
\hfill $\square$
\end{thm}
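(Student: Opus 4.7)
The plan is to prove Theorem \ref{TH1} by a random coding argument within the pre-encoder/pre-verifier framework of Subsection \ref{S2-1B}, followed by a conversion to encoder/verifier pairs via an arbitrary bijection between $\tilde{\mathcal{M}}_n$ and $\mathcal{M}_n \times \mathcal{L}_n$. Concretely, I would set $\tilde{\sM}_n := 2^{n(R_1+R_2)}$ and draw the pre-encoder codewords $\tilde{\phi}_n(1), \ldots, \tilde{\phi}_n(\tilde{\sM}_n) \in \mathcal{X}^n$ i.i.d.\ from $P^{\otimes n}$. For the pre-verifier $\tilde{D}_n = \{\tilde{\mathcal{D}}_{\tilde m}\}$, I would use a Hayashi--Nagaoka style construction built from the conditional typical projector of each $W_{\tilde{\phi}_n(\tilde m)}^{\otimes n}$ and the typical projector of the average $W_P^{\otimes n}$. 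The resulting code is defined by $\phi_n(m,l) := \tilde{\phi}_n(m,l)$ and $\mathcal{D}_{m,l} := \tilde{\mathcal{D}}_{(m,l)}$.

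First, condition (a) --- the verifiable condition $\eps_A(\phi_n,D_n) \to 0$ --- follows from standard quantum joint-typicality arguments, since each verifier $\mathcal{D}_{m,l}$ is designed to accept its own codeword. Second, for condition (b$\alpha$), the R\'enyi equivocation concealing condition: since $R_2 > I(X;Y)_P$, a R\'enyi version of the quantum soft-covering lemma guarantees that within each $m$-slice $\frac{1}{\sL_n}\sum_{l=1}^{\sL_n} W_{\phi_n(m,l)}^{\otimes n}$ approximates $W_P^{\otimes n}$ in the sandwiched R\'enyi divergence sense, with high probability over the random codebook, for some $\alpha > 1$ sufficiently close to $1$. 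From the definition \eqref{HI3}, this yields $E_\alpha(\phi_n) \geq n(R_1 + R_2 - I(X;Y)_P) - o(n) \geq nR_1 - o(n) = \log \sM_n - o(n)$, and a standard information-spectrum / Pinsker-type bound then converts this R\'enyi equivocation into the trace-norm concealing quantity $\delta_B(\phi_n) \to 0$ defined in \eqref{ZXL3}.

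The main obstacle is condition (c), the binding condition $\delta_C(D_n) \to 0$, which is genuinely adversarial: one must show that for \emph{every} $x^n \in \mathcal{X}^n$, the verifier $\mathcal{D}_{m,l}$ has significant overlap with $W_{x^n}^{\otimes n}$ for at most one pair $(m,l)$. Here both the sparsity constraint $R_1 + R_2 < H(X)_P$ and the non-redundant condition (NR) are indispensable. The rate constraint means the $\tilde{\sM}_n$ random codewords are exponentially sparse in the typical set of size $\approx 2^{nH(X)_P}$, so that with high probability no two distinct codewords are close in any sense that would permit a single $x^n$ to impersonate both. Condition (NR) prevents any $W_{x^n}$ from lying in the convex hull of other $W_{x'^n}$'s, which blocks the only conceivable way a dishonest $x^n$ could be accepted by multiple verifiers simultaneously. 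Making this quantitative --- via a large deviation analysis over the random codebook that controls the second-largest value of $1-\eps_{A,m,l}(x^n,D_n)$ uniformly in $x^n$ --- is the technical heart of Section \ref{S6}. Once all three pre-encoder conditions (a), (b$\alpha$), (c) are established at the required rates, the reduction to the code-level conditions (A), (B), (C), and hence to a secure sequence $\{(\phi_n, D_n)\}$ in the sense of the theorem, follows by routine arguments.
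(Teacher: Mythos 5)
Your high-level outline is right that the proof should go through the pre-encoder/pre-verifier framework and a random coding argument with codewords drawn i.i.d.\ from $P^{\otimes n}$; this matches the paper's proof of Theorem~\ref{TH3}. However there are two genuine gaps in how you complete the argument.

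\textbf{The conversion from pre-code to code is not an arbitrary bijection.} You propose to split $\tilde{\mathcal{M}}_n$ into $\mathcal{M}_n \times \mathcal{L}_n$ by ``an arbitrary bijection'' and then argue $\delta_B(\phi_n) \to 0$ from $E_\alpha(\tilde\phi_n)$ via ``a standard information-spectrum / Pinsker-type bound.'' This does not follow: $E_\alpha$ is the R\'enyi equivocation of the \emph{entire} pre-message $\tilde M$ against the channel output, whereas $\delta_B$ requires the $m$-slice averages $\frac{1}{\sL}\sum_l W_{\phi_n(m,l)}^{(n)}$ to be close to each other in trace norm. A generic bijection does not transfer total equivocation into per-slice indistinguishability. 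The paper instead selects an invertible linear map $F: \tilde{\mathcal M}_n \to \mathcal M_n \oplus \mathcal L_n$ such that the composition $P\circ F$ (projection onto the $m$-part) is a universal$_2$ hash function, and then applies the privacy-amplification bound of Proposition~\ref{propouniv} (Dupuis / Hayashi) to get \eqref{NM4}, which together with $R_2 > \tilde I_\alpha(X;Y)_P$ for $\alpha$ near $1$ gives $\delta_B(\phi_n)\to 0$. The hash structure is essential, not incidental. (A channel-resolvability / soft-covering route, applied slice-by-slice, would be a legitimate alternative, but then you would argue about the slices directly and $E_\alpha$ plays no role; your sketch conflates the two methods and as written neither is carried out.)

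\textbf{The verifier must be purpose-built for binding, not a typicality decoder.} You propose a Hayashi--Nagaoka construction from conditional typical projectors. The paper does not use typical projectors: the verifier is $\Pi_{x^n} = \{\Xi_{x^n}^{(n)} \ge -n\eps_1\}$ where the Hermitian operators $\Xi_x$ satisfy \eqref{CS1}--\eqref{CS3} (their existence is Lemma~\ref{LS3}, which is exactly where Condition~(NR) enters). The point is that this test rejects any $x^n$ whose Hamming distance from the codeword exceeds a threshold, quantified by Chebyshev in Lemma~\ref{LL12}, and combined with the codeword separation guaranteed by $R_1+R_2 < H(X)_P$ (Lemma~\ref{LL11B}) this gives $\delta_C(D_n)\to 0$. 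A typical-projector test verifies acceptance on the honest codeword (your condition (a)) but gives no uniform-over-$x^n$ rejection guarantee; the binding property would not follow. You correctly identify binding as the technical heart and note that both (NR) and $R_1+R_2 < H(X)_P$ are needed, but the mechanism you gesture at is not the one that works.
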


Therefore, there exists the above type of a code 
when there exists a distribution $P$ on ${\cal X}$ such that
$ R_1 < H(X|Y)_P$.
The combination of this fact and Lemma \ref{CLP} yields \eqref{MO2}.
That is, for the direct part \eqref{MO2},
it is sufficient to show Theorem \ref{TH1}.

To show Theorem \ref{TH1}, we discuss a sequence of pre-codes $ \{(\phi_n,D_n)\}$.
We say that a sequence of pre-codes $ \{(\phi_n,D_n)\}$
is $(\alpha,r_\alpha)$-secure when 
$\eps_{A}(\phi_n,D_n)\to 0$, $\delta_{C}(D_n)\to 0$,
and 
$\lim_{n\to \infty}\frac{1}{n}E_{\alpha}(\phi_n)\ge r_\alpha$ for $\alpha>1$.

\begin{thm}\Label{TH3}
Assume Condition (NR).
For any distribution $P\in {\cal P}({\cal X})$,
there exists a $(\alpha,r_\alpha)$-secure sequence 
of pre-codes $ \{(\phi_n,D_n)\}$
with 
$\tilde{\sM}_n:=|(\phi_n,D_n)|=2^{\lfloor n R_1 \rfloor+\lfloor n R_2 \rfloor}$ 
when there exists a distribution $P$ on ${\cal X}$ such that
\begin{align}
R_1+R_2 <H(X)_P, ~ r_\alpha = R_1+R_2-\tilde{I}_\alpha(X;Y)_P.
\Label{CMP}
\end{align}
\hfill $\square$
\end{thm}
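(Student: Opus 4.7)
The proof plan is to combine a random-coding construction of $\phi_n$ with codewords drawn i.i.d.~from $P^n$ and a verifier built from conditional typical projections. The argument proceeds in four phases.

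\emph{Phase 1 (code construction).} Draw each codeword $\phi_n(m) \in {\cal X}^n$, $m \in \{1,\ldots,\tilde{\sM}_n\}$, independently from $P^n$. Build the verifier $\{{\cal D}_{n,m}\}$ from conditional typical projectors for $W^{(n)}_{\phi_n(m)}$ intersected with the total typical projector for $W_P^{\otimes n}$, normalized via a Hayashi--Nagaoka-type operator inequality so that $\sum_m {\cal D}_{n,m}\le I$. Then apply an expurgation step that discards a vanishing fraction of the worst codewords to upgrade average-case reliability to the worst-case reliability required in~(a).

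\emph{Phase 2 (verification).} Standard conditional-typicality estimates give $\mathbb{E}[\Tr(W^{(n)}_{\phi_n(m)} {\cal D}_{n,m})]\to 1$ exponentially fast in $n$; after the Phase~1 expurgation this bound holds uniformly over $m$, yielding $\eps_A(\phi_n,D_n)\to 0$.

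\emph{Phase 3 (binding).} The condition $\delta_C(D_n)\to 0$ demands that for every $x^n \in {\cal X}^n$ the second-largest value of $\Tr(W^{(n)}_{x^n} {\cal D}_{n,m})$ over $m$ vanishes. Since ${\cal D}_{n,m}$ is essentially supported in the conditional typical subspace of $\phi_n(m)$, the non-redundancy condition~(NR) prevents any $W^{(n)}_{x^n}$ from simultaneously having large overlap with two such near-orthogonal subspaces, and the rate slack $R_1+R_2 < H(X)_P$ controls a union bound uniformly in $x^n$.

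\emph{Phase 4 (R\'enyi concealing).} Using \eqref{NNM}, $E_\alpha(\phi_n) = \tilde{H}_\alpha(\tilde{M}|Y)$ on the cq-state $\sum_m \tilde{\sM}_n^{-1}|m\rangle\langle m|\otimes W^{(n)}_{\phi_n(m)}$. Plug $\sigma = (\sigma^*)^{\otimes n}$ into \eqref{HI3} with $\sigma^*$ the single-letter minimizer of $\tilde{I}_\alpha(X;Y)_P$; the product structure of $W^{(n)}_{\phi_n(m)}$ implies the sandwiched divergence tensorizes, and i.i.d.~sampling yields
\begin{align*}
\mathbb{E}\Bigl[\frac{1}{\tilde{\sM}_n}\sum_m 2^{(\alpha-1)\tilde{D}_\alpha(W^{(n)}_{\phi_n(m)}\|(\sigma^*)^{\otimes n})}\Bigr] = 2^{n(\alpha-1)\tilde{I}_\alpha(X;Y)_P}.
\end{align*}
A Markov-type concentration then gives $\tfrac{1}{n}E_\alpha(\phi_n)\ge R_1+R_2 - \tilde{I}_\alpha(X;Y)_P + o(1)$ with high probability over the codebook.

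The hard part will be Phase~3: one must upgrade a distinguishability bound from ``distinguish codewords'' to ``no $x^n$ in the full input space can be confused with more than one codeword.'' The non-redundancy condition~(NR) is the structural hypothesis making this possible, but translating it into a quantitative uniform Hayashi--Nagaoka-type bound strong enough to exploit the slack $H(X)_P - (R_1+R_2) > 0$ will require care, since the classical list-decoding argument from~\cite{H2021} cannot be lifted directly owing to non-commutativity of the $W_x$'s.
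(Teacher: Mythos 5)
The high-level random-coding framework (draw codewords i.i.d.\ from $P^n$, expurgate, then tensorize the sandwiched divergence against a single-letter $\sigma^*$ to control $E_\alpha$) matches the paper, and your Phase~4 is essentially identical to the paper's Lemma~\ref{LL19} and Step~4. However, your Phase~3 is not a proof: you correctly flag binding as the hard step, but the mechanism you sketch would not work, and the key ideas the paper actually uses are absent.

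The difficulty is that binding (condition~(c)) quantifies over \emph{all} $x^n \in {\cal X}^n$, not just codewords, and conditional typical subspaces of $W^{(n)}_{\phi_n(m)}$ and $W^{(n)}_{\phi_n(m')}$ are not near-orthogonal in general; they overlap badly whenever $\phi_n(m)$ and $\phi_n(m')$ agree on most coordinates, regardless of (NR). The paper's resolution has two ingredients you don't have. First, it \emph{expurgates for Hamming-distance separation}: Lemma~\ref{LL11B} (using the slack $R_1+R_2<H(X)_P$) and a Markov-inequality selection in Step~2 produce a subcode in which every pair of codewords is at Hamming distance $> n\eps_2$. Second, it replaces conditional typical projectors by projectors $\Pi_{x^n}=\{\Xi^{(n)}_{x^n}\ge -n\eps_1\}$, where $\Xi^{(n)}_{x^n}$ is an i.i.d.\ sum of single-letter Hermitian test operators $\Xi_x$ satisfying \eqref{CS1}--\eqref{CS3}; Lemma~\ref{LS3} shows these exist precisely because of (NR). Then Lemma~\ref{LL12} gives binding by a Chebyshev argument: for any $x^n$, the triangle inequality on Hamming distance forces $x^n$ to be far from at least one of any two expurgated codewords, and \eqref{CS1}--\eqref{CS3} make $\Tr[W^{(n)}_{x^n}\Pi_{\phi_n(m)}]$ decay when $d_H(x^n,\phi_n(m))$ is large. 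Your appeal to a Hayashi--Nagaoka normalization $\sum_m {\cal D}_{n,m}\le I$ is also off: verifiers $\{{\cal D}_m\}$ in this problem are individual binary tests, not a sub-POVM, and the paper deliberately avoids any measurement that would disturb the state during the commit phase --- Bob stores the state and only projects in the reveal phase. So your proposal is missing the central ideas of the direct part (Hamming-distance expurgation, the $\Xi_x$-based tests built from (NR), and the Chebyshev binding bound), and the substitute mechanism you offer for binding would not close the gap.
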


\subsection{Proof of Theorem \ref{TH1}}
Here, we show Theorem \ref{TH1} by using Theorem \ref{TH3}.
Given $R_1,R_2$ that satisfies the condition \eqref{XM1},
we define 
${\cal M}_n:=\bF_2^{\lfloor n R_1 \rfloor}$
${\cal L}_n:=\bF_2^{\lfloor n R_2\rfloor }$.
Using Theorem \ref{TH3},
we choose a pre-code 
$(\tilde{\phi}_n,D_n)$, 
and
the set $\tilde{\cal M}_n$ is identified with 
$\bF_2^{ \lfloor n R_1\rfloor +\lfloor n R_2\rfloor }$.

We denote the projection from 
${\cal M}_n \oplus {\cal L}_n$
to ${\cal M}_n $ by $P$.
We randomly choose an invertible linear map $F$ from
$\tilde{\cal M}_n$ to 
${\cal M}_n \oplus {\cal L}_n$
such that $P\circ F$ satisfies the universal2 hash condition (see \cite{Carter, WC81} for more details on universal2 hash functions). 
%

Then, there exists a liner invertible function $f$ from 
$\tilde{\cal M}_n$ to 
${\cal M}_n \oplus {\cal L}_n$
such that
\begin{equation}
\|\rho_{ P\circ f (\tilde{M}),Y}
-\rho_{mix,M}\otimes \rho_E\|_1 
\le 2^{\frac{2}{\alpha}-1+\frac{\alpha-1}{\alpha}
(\log |{\cal B}|-E_\alpha(\tilde{\phi}))}\Label{NM4}
\end{equation}
for $\alpha\in (1,2]$, where
$\rho_{\tilde{M},Y}:=
\sum_{\tilde{m}  \in \tilde{\cal M}_n }
\frac{1}{|\tilde{\cal M}_n|}
|\tilde{m}\rangle \langle \tilde{m}| \otimes
W_{\tilde{\phi}_n(\tilde{m})}^{(n)}$. The inequality in \eqref{NM4}
follows because of the Proposition \ref{propouniv} mentioned below at the end of this subsection. 
We define $\phi_n(m,l):= \tilde{\phi}_n(f^{-1}(m,l))$.
We have
\begin{align}
\rho_{ P\circ f (\tilde{M}),Y}= 
\sum_{m \in {\cal M}_n}\frac{1}{|{\cal M}_n|}
|m\rangle \langle m|\otimes 
\sum_{l \in {\cal L}}\frac{1}{|{\cal L}|}W^{(n)}_{\phi_n(m,l)}.
\end{align}
Hence, 
\begin{align}
\delta_{B}(\phi_n)=
\|\rho_{ P\circ f (\tilde{M}),Y}
-\rho_{mix,M}\otimes \rho_Y\|_1 .\Label{NM3}
\end{align}
Since $r_a$ satisfies the second condition in \eqref{CMP},
when $\alpha$ is close to $1$, 
we have
\begin{align}
\lim_{n\to \infty}\frac{\log |{\cal B}|-E_\alpha(\tilde{\phi})}{n}
=
R_1 -r_\alpha= R_1-(R_1+R_2)+\tilde{I}_\alpha(X;Y)_Y
=\tilde{I}_\alpha(X;Y)_Y-R_2<0.\Label{NM2}
\end{align} 
The combination of 
\eqref{NM2},
\eqref{NM3}, and \eqref{NM4} shows that 
$\delta_{B}(\phi_n) \to 0$.
Other two conditions 
$\eps_{A}(\phi_n,D_n)\to 0$ and
$\delta_{C}(D_n)\to 0$ follow from Theorem \ref{TH3}.
\endproof

\begin{proposition}[\protect{\cite{Dupuis}\cite{H2014}}]
\label{propouniv}
Let $G$ be a universal2 hash function from 
${\cal A}$ to ${\cal B}$
Then, we have
\begin{align}
\mathbb{E}_G \|\rho_{G(A)E}
-\rho_{mix,B}\otimes \rho_E\|_1 
\le 2^{\frac{2}{\alpha}-1+\frac{\alpha-1}{\alpha}
(\log |{\cal B}|-\tilde{H}_\alpha(A|E))}
\end{align}
for $\alpha\in (1,2]$.
\hfill $\square$
\end{proposition}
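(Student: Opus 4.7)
The plan is to follow the standard quantum leftover hashing argument against quantum side information, formulated in terms of sandwiched R\'enyi entropies. The argument decomposes into three conceptual stages: (i) reduce the trace norm on the left-hand side to a sandwiched $\alpha$-norm against an arbitrary reference state $\sigma_E$ on ${\cal H}_E$; (ii) move the expectation over $G$ inside this norm via Jensen's inequality; (iii) expand the resulting quantity using the universal-$2$ property and optimise over $\sigma_E$ to recover $\tilde{H}_\alpha(A|E)$.

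For step (i), fix $\sigma_E\in{\cal S}({\cal H}_E)$ and set $\Delta_G:=\rho_{G(A)E}-\rho_{mix,B}\otimes\rho_E$. Writing $X:=I_B\otimes\sigma_E^{(\alpha-1)/(2\alpha)}$ and $Y_G:=X^{-1}\Delta_G X^{-1}$, the three-term H\"older inequality with exponents $\bigl(\tfrac{2\alpha}{\alpha-1},\alpha,\tfrac{2\alpha}{\alpha-1}\bigr)$ gives
\begin{align*}
\|\Delta_G\|_1 \;=\; \|X Y_G X\|_1 \;\le\; \|X\|_{2\alpha/(\alpha-1)}^{2}\,\|Y_G\|_\alpha \;=\; |{\cal B}|^{(\alpha-1)/\alpha}\,\|Y_G\|_\alpha,
\end{align*}
where the last equality uses $\Tr\sigma_E=1$. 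Taking $\mathbb{E}_G$ and applying Jensen's inequality to the convex map $t\mapsto t^\alpha$ yields
\begin{align*}
\mathbb{E}_G\|\Delta_G\|_1 \;\le\; |{\cal B}|^{(\alpha-1)/\alpha}\,\bigl(\mathbb{E}_G\|Y_G\|_\alpha^\alpha\bigr)^{1/\alpha}.
\end{align*}

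The technical core is to bound $\mathbb{E}_G\|Y_G\|_\alpha^\alpha$ by a constant multiple of the sandwiched quantity $\Tr\bigl[(\sigma_E^{(1-\alpha)/(2\alpha)}\rho_{AE}\sigma_E^{(1-\alpha)/(2\alpha)})^\alpha\bigr] = 2^{(\alpha-1)\tilde{D}_\alpha(\rho_{AE}\|I_A\otimes\sigma_E)}$. For $\alpha=2$ this is transparent: the Hilbert--Schmidt norm expands as a sum indexed by pairs $(a,a')$, and the universal-$2$ inequality $\mathbb{E}_G[\mathbf{1}_{G(a)=G(a')}]\le \mathbf{1}_{a=a'}+1/|{\cal B}|$, combined with cancellation of the flat component (because $\mathbb{E}_G$ of the tensor-product term precisely balances the $1/|{\cal B}|$ contribution), produces exactly $\Tr\bigl[(\sigma_E^{-1/4}\rho_{AE}\sigma_E^{-1/4})^2\bigr]$. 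For $\alpha\in(1,2)$ the $\alpha$-norm has no such clean expansion, and one must interpolate: either by complex interpolation between $\alpha=1$ (trivial, by the triangle inequality for $\|\cdot\|_1$) and $\alpha=2$, or via the operator-monotone arguments of \cite{Dupuis,H2014}. This is where the technical constant $2^{2/\alpha-1}$ in the statement originates.

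Finally, using $\tilde{H}_\alpha(A|E)=\max_{\sigma_E}-\tilde{D}_\alpha(\rho_{AE}\|I_A\otimes\sigma_E)$, taking the infimum over $\sigma_E$ converts the $\frac{1}{\alpha}(\alpha-1)\tilde{D}_\alpha$ contribution in the exponent into $-\frac{\alpha-1}{\alpha}\tilde{H}_\alpha(A|E)$, which combined with the prefactor $|{\cal B}|^{(\alpha-1)/\alpha}$ and the interpolation constant produces exactly $2^{\frac{2}{\alpha}-1+\frac{\alpha-1}{\alpha}(\log|{\cal B}|-\tilde{H}_\alpha(A|E))}$. The principal obstacle is step (iii) for general $\alpha$: only the Hilbert--Schmidt case admits a direct universal-$2$ expansion, and extending to $\alpha\in(1,2)$ requires the interpolation argument that introduces the constant $2^{2/\alpha-1}$ and makes the bound tight enough to be useful in the secrecy amplification step of Theorem~\ref{TH1}.
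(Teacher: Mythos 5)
The paper does not prove Proposition~\ref{propouniv}; it is stated with the citation \cite{Dupuis,H2014} and used as a black box, so there is no in-paper proof to compare against. Assessed on its own terms, your outline reproduces the correct skeleton of the argument in those references. Step (i) is sound: the three-factor H\"older estimate with exponents $\bigl(\tfrac{2\alpha}{\alpha-1},\alpha,\tfrac{2\alpha}{\alpha-1}\bigr)$ is valid, the identity $\|I_B\otimes\sigma_E^{(\alpha-1)/(2\alpha)}\|_{2\alpha/(\alpha-1)}^{2}=|{\cal B}|^{(\alpha-1)/\alpha}$ follows from $\Tr\sigma_E=1$, and $\Delta_G$ is Hermitian so the $\alpha$-norm of $Y_G$ is well defined. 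Step (ii) is the convexity of $t\mapsto t^\alpha$. Your observation that the endpoints $\alpha=1$ and $\alpha=2$ give constants $2$ and $1$ and that geometric interpolation between them yields $2^{2/\alpha-1}$ is also consistent with the stated bound.

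The gap is step (iii), and you are candid about it: bounding $\mathbb{E}_G\|Y_G\|_\alpha^\alpha$ by a constant multiple of $2^{(\alpha-1)\tilde{D}_\alpha(\rho_{AE}\|I_A\otimes\sigma_E)}$ is the entire content of the proposition. Your $\alpha=2$ sketch (universal-$2$ expansion of the Hilbert--Schmidt norm, cancellation of the flat term against the $1/|{\cal B}|$ collision probability) is correct, but for $\alpha\in(1,2)$ the functional $M\mapsto\Tr|M|^\alpha$ admits no such pairwise expansion, and writing ``one must interpolate\ldots or via the operator-monotone arguments of \cite{Dupuis,H2014}'' is a pointer to the missing argument rather than a proof of it. In particular, one cannot simply interpolate the scalar $\mathbb{E}_G\|Y_G\|_\alpha^\alpha$ as a function of $\alpha$; Dupuis sets up a holomorphic family of operators and applies a Stein/Riesz--Thorin-type theorem, while Hayashi works with operator inequalities directly, and it is precisely this step that produces the constant $2^{2/\alpha-1}$. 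Since neither route is carried out, the proposal is a correct roadmap of the cited proof but does not itself establish the statement.
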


\subsection{Outline of proof of Theorem \ref{TH3}}\Label{OUT-K}
Here, we present the outline of 
Theorem \ref{TH3}.
To realize Binding condition (c),
we need to exclude the existence of $x^n \in {\cal X}^n$ and $m\neq m' \in \tilde{{\cal M}}_n$
such that $1-\eps_{A,m}(x^n,D)$ and $1-\eps_{A,m'}(x^n,D)$ are far from 0.
For this aim, we 
focus on Hamming distance 
$d_H(x^n,{x^n}')$ between $x^n=(x_1^n, \ldots, x^n_n), {x^n}'=({x_1^n}', \ldots, {x^n_n}') \in {\cal X}^n$
as
\begin{align}
d_H(x^n,{x^n}'):= | \{  k|  x_k^n\neq {x_k^n}'\}|.\Label{ZL5}
\end{align}
and Hermitian matricess $\{\Xi_x\}_{x \in {\cal X}}$ to satisfy the following conditions;
\begin{align}
&\Tr [W_x \Xi_x] =0,\Label{CS1}\\
&\zeta_1:=\min_{x\neq x' \in {\cal X}} -(\Tr [W_{x'} \Xi_x]) >0, \Label{CS2}\\
&\zeta_2 :=\max_{x, x' \in {\cal X}} 
\Tr [W_{x'}(\Xi_x - \Tr [W_{x'} \Xi_x])^2]  < \infty.\Label{CS3}
\end{align}
For $x^n=(x_1^n, \ldots, x_n^n)\in {\cal X}^n$, we define
\begin{align}
\Xi_{x^n}^{(n)}:= \sum_{i=1}^n 
I^{\otimes (i-1)}\otimes 
\Xi_{x_i^n}\otimes I^{\otimes (n- i)}.
\end{align}
Then, given an encoder $\phi_n$ mapping $\tilde{\cal M}_n$ to ${\cal X}^n$,
we employ the following projection 
to Bob's decoder to include the message $m$ in his decoded list;
\begin{align}
\{ \Xi_{\phi_n(m)}^{(n)} \ge - \eps_1 n\} .\Label{CS4}
\end{align}
The projection \eqref{CS4} performs 
$1-\eps_{A,m}(x^n,D)$ small when 
$d_H(x^n, \phi_n(m))$ is larger than a certain threshold.

Indeed, we have the following lemma.
\begin{lem}\Label{LS3}
When Condition (NR) holds, there exist
functions $\{\Xi_x\}_{x \in {\cal X}}$ that satisfies the conditions \eqref{CS1},  \eqref{CS2}, and \eqref{CS3}.
\hfill $\square$\end{lem}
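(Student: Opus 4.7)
My plan is a standard separating-hyperplane argument. First, I translate Condition (NR) into a geometric non-containment statement. Since $D(\sigma\|\rho)=0$ forces $\sigma=\rho$ (and $D=+\infty$ when the support of $\sigma$ is not contained in that of $\rho$), Condition (NR) rules out the identity $W_x=\sum_{x'\neq x} P(x')\,W_{x'}$ for every probability distribution $P$ on $\cX\setminus\{x\}$; equivalently,
\begin{align}
W_x \notin C_x := \mathrm{conv}\{W_{x'} : x' \in \cX\setminus\{x\}\}.
\end{align}
The set $C_x$ is the convex hull of finitely many density operators, hence a compact convex subset of the finite-dimensional real vector space of Hermitian operators on ${\cal H}_Y$ equipped with the Hilbert--Schmidt inner product.

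Second, I invoke the strict separating hyperplane theorem on this real vector space: there exist a Hermitian operator $H_x$ and a real number $c_x$ such that $\Tr[W_x H_x] > c_x$ and $\Tr[\rho H_x] \le c_x$ for every $\rho \in C_x$. In particular, $\Tr[W_{x'} H_x] < \Tr[W_x H_x]$ for every $x' \in \cX\setminus\{x\}$.

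Third, I define
\begin{align}
\Xi_x := H_x - \Tr[W_x H_x]\, I .
\end{align}
This choice enforces \eqref{CS1} automatically, since $\Tr[W_x \Xi_x] = \Tr[W_x H_x] - \Tr[W_x H_x] = 0$. For $x'\neq x$, the separation inequality yields
\begin{align}
\Tr[W_{x'}\Xi_x] = \Tr[W_{x'}H_x] - \Tr[W_x H_x] < 0,
\end{align}
and taking the minimum of $-\Tr[W_{x'}\Xi_x]$ over the finite set of ordered pairs $(x,x')$ with $x\neq x'$ gives $\zeta_1>0$, which is \eqref{CS2}. Condition \eqref{CS3} is automatic: each $\Xi_x$ is a bounded Hermitian matrix on a finite-dimensional space and $\cX$ is finite, so the double maximum $\zeta_2$ is finite.

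The only step requiring real care is the first one, namely turning the entropic non-redundancy (NR) into the geometric non-containment $W_x\notin C_x$; once that is established, the construction reduces to Hahn--Banach separation followed by an affine shift chosen to kill the trace against $W_x$.
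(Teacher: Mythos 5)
Your proof is correct. The core idea — turning Condition (NR) into the geometric statement $W_x \notin \mathrm{conv}\{W_{x'}: x'\neq x\}$ and then separating — is also the core idea of the paper's proof, but your construction of $\Xi_x$ from the separating operator differs from the paper's and is more self-contained. The paper, after obtaining a self-adjoint $A_x$ with $\Tr[W_x A_x] > \Tr[W_{x'} A_x]$ for all $x'\neq x$, diagonalizes $A_x$, passes to the induced classical distributions $P_x(j)=\langle e_{j,x}|W_x|e_{j,x}\rangle$ and $P_{x'}(j)=\langle e_{j,x}|W_{x'}|e_{j,x}\rangle$, observes that $P_x$ lies outside $\mathrm{conv}\{P_{x'}\}_{x'\neq x}$, and then invokes Lemma 1 of the cited reference [H2021] as a black box to produce the classical function $\xi_x$, which is then lifted back to $\Xi_x=\sum_j \xi_x(j)|e_{j,x}\rangle\langle e_{j,x}|$. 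You instead skip the reduction to the classical case entirely and produce $\Xi_x$ by the affine shift $\Xi_x = H_x - \Tr[W_x H_x]\,I$, which makes \eqref{CS1} automatic, inherits \eqref{CS2} from the strict separation, and gives \eqref{CS3} trivially from finite-dimensionality and finiteness of $\mathcal{X}$. Your route is shorter and does not rely on an external lemma; the paper's route has the advantage of reusing existing machinery from [H2021] (which is also used elsewhere in the direct part). Both are sound.
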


\begin{proof}
We show the desired statement for each $x \in {\cal X}$.
If any a self-adjoint operator $A_x$ satisfies
that 
$\Tr W_x A_x$ belongs to the convex hull of $\{ \Tr W_{x'} A_x\}_{x' \in {\cal X}\setminus\{x\}}$,
$W_x$ belongs to the set $
\{  \sum_{x' \in {\cal X} \setminus \{x\} }P(x')W_{x'}
| P \in {\cal P}({\cal X} \setminus \{x\})\}$.
Due to Condition {(NR)},
$W_x$ does not belong to the set $
\{  \sum_{x' \in {\cal X} \setminus \{x\} }P(x')W_{x'}
| P \in {\cal P}({\cal X} \setminus \{x\})\}$.
Considering the contraposition of the above statement, we have the following;
there exists a self-adjoint operator $A_x$ such that 
$\Tr W_x A_x > \Tr W_{x'} A_x  $ for $x' \in {\cal X}\setminus\{x\}$.
We choose 
a basis $\{|e_{j,x}\rangle \}_j$ to diagonal $A_x$, and define
$P_{x}(j):= \langle e_{j,x}|W_{x}|e_{j,x}\rangle$ and
$P_{x'}(j):= \langle e_{j,x}|W_{x'}|e_{j,x}\rangle$.
Then, $P_x$ does not belong to the convex hull of $\{P_{x'}\}_{x'\neq x} $.
Hence, applying Lemma 1 of \cite{H2021}, we obtained 
the desired statement for $x \in {\cal X}$.
\end{proof}

\section{Proof of Theorem \ref{TH3}}\label{S6}
\noindent {\bf Step 0}: 
We set $\overline{\sM}_n:= \frac{3}{2}\cdot 2^{ \lfloor n R_1 \rfloor+\lfloor n R_2 \rfloor}$.
Hence, $\tilde{\sM}_n= \frac{2}{3}\cdot \overline{\sM}_n$.
We prepare the verifier used in this proof as follows.
\begin{define}[Verifier $D_{\phi_n}$]\Label{Def1}
Given a distribution $P$ on ${\cal X}$, 
we define the verifier $D_{\phi_n}$ for a given encoder $\phi_n$ (a map from
$\overline{{\cal M}}_n:=
\{1, \ldots, \overline{\sM}_n\}$ to ${\cal X}^n$) in the following way.
Using the condition \eqref{CS4},
we define the projection
$\Pi_{x^n}:=
\{\Xi_{x^n}^{(n)}\ge -n\eps_1 \}$.
We define the verifier
$D_{\phi_n}=\{  \Pi_{\phi_n(m)} \}_m$.
\hfill $\square$
\end{define}

Remember that, 
for $x^n=(x^n_1, \ldots, x^n_n),{x^n}'=({x^n_1}', \ldots, {x^n_n}')\in {\cal X}^n$, 
Hamming distance $d_H(x^n,{x^n}')$ is defined to be the number of $k$ such that $x_k^n \neq {x_k^n}'$ as \eqref{ZL5} in Subsection \ref{OUT-K}.
In the proof of Theorem \ref{TH3}, 
we need to extract an encoder $\phi_n$ and elements $m \in {\cal M}_n$ that satisfies the following Hamming distance condition; \begin{align}
d_H(\phi_n(m),\phi_n(j)) >  n \eps_2 
\hbox{ for }\forall j\neq m. \Label{CC2} 
\end{align}
For this aim, given a code $\phi_n$ and a real number $\eps_2 >0$, we define the function
$\eta_{\phi_n,\eps_2}^C$
from $\overline{{\cal M}}_n $ to $\{0,1\}$ as
\begin{align}
\eta_{\phi_n,\eps_2}^C(m) &:=
\left\{
\begin{array}{ll}
0 & \hbox{ when \eqref{CC2} holds} \\
1 & \hbox{ otherwise. }
\end{array}
\right. \Label{De2}
\end{align}

As shown in Appendix \ref{S7-C}, we have the following lemma.
\begin{lem}\Label{LL12}
When a code $\tilde{\phi}_n$ defined in a subset $\tilde{{\cal M}}_n\subset 
\overline{{\cal M}}_n$
satisfies
\begin{align}
d_H(\tilde{\phi}_n(m),\tilde{\phi}_n(m'))> n \eps_2 \Label{E41}
\end{align}
for two distinct elements $ m \neq m'\in \tilde{{\cal M}}_n$,
the verifier $D_{\tilde{\phi}_n}$ defined in Definition \ref{Def1} satisfies 
\begin{align}
&\delta_{D}(D_{\tilde{\phi}_n}) 
\le
\frac{ \zeta_2}{{n} [\zeta_1\frac{\eps_2}{2}- \eps_1 ]_+^2  } .
\end{align}
\hfill $\square$
\end{lem}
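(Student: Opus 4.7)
The plan is to unpack what $\delta_{C}(D_{\tilde{\phi}_n})$ demands and then reduce the quantum estimation to a classical Chebyshev-type bound on a sum of independent random variables. Since the verifier is $\{\Pi_{\tilde{\phi}_n(m)}\}_m$ with $\Pi_{x^n}=\{\Xi^{(n)}_{x^n}\ge -n\eps_1\}$, the binding parameter associated with an input sequence $x^n\in{\cal X}^n$ is the second largest value among $\{\Tr[W^{(n)}_{x^n}\Pi_{\tilde{\phi}_n(m)}]\}_{m\in \tilde{\cal M}_n}$. So it suffices to bound $\Tr[W^{(n)}_{x^n}\Pi_{\tilde{\phi}_n(m)}]$ for all but (at most) one message $m$, uniformly in $x^n$.

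The first step is a pigeon-hole / triangle-inequality argument. Using the Hamming-distance hypothesis \eqref{E41}, for a fixed $x^n$ there can be at most one $m\in\tilde{\cal M}_n$ with $d_H(x^n,\tilde{\phi}_n(m))\le n\eps_2/2$; otherwise two such messages $m\neq m'$ would give $d_H(\tilde{\phi}_n(m),\tilde{\phi}_n(m'))\le n\eps_2$, contradicting \eqref{E41}. So for every $x^n$ there is at most one ``bad'' message, and for every other $m$ we may assume $d_H(x^n,\tilde{\phi}_n(m))>n\eps_2/2$.

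For such an $m$ I would expand $\Xi^{(n)}_{\tilde{\phi}_n(m)}=\sum_{i=1}^{n} I^{\otimes(i-1)}\otimes \Xi_{(\tilde{\phi}_n(m))_i}\otimes I^{\otimes(n-i)}$. Because $W^{(n)}_{x^n}=\bigotimes_i W_{x_i}$ is a product state and the summands act on distinct tensor factors, measuring $\Xi^{(n)}_{\tilde{\phi}_n(m)}$ in $W^{(n)}_{x^n}$ is equivalent (via the joint spectral decomposition) to summing $n$ independent classical random variables $X_i$, where $X_i$ has mean $\mu_i=\Tr[W_{x_i}\Xi_{(\tilde{\phi}_n(m))_i}]$ and variance $\sigma_i^2=\Tr[W_{x_i}(\Xi_{(\tilde{\phi}_n(m))_i}-\mu_i)^2]$. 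Conditions \eqref{CS1} and \eqref{CS2} give $\mu_i=0$ when $x_i=(\tilde{\phi}_n(m))_i$ and $\mu_i\le -\zeta_1$ otherwise, so
\begin{align}
\mathrm{E}\Big[\sum_i X_i\Big]\le -\zeta_1 \, d_H(x^n,\tilde{\phi}_n(m))< -\tfrac{\zeta_1\eps_2}{2}\,n,
\end{align}
while \eqref{CS3} gives $\mathrm{Var}(\sum_i X_i)\le n\zeta_2$.

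The final step is Chebyshev's inequality. Writing $S:=\sum_i X_i$,
\begin{align}
\Tr\!\big[W^{(n)}_{x^n}\Pi_{\tilde{\phi}_n(m)}\big] = \Pr\{S\ge -n\eps_1\}\le \Pr\{|S-\mathrm{E}[S]|\ge -n\eps_1-\mathrm{E}[S]\} \le \frac{n\zeta_2}{n^2[\zeta_1\eps_2/2-\eps_1]_+^2},
\end{align}
provided $\zeta_1\eps_2/2>\eps_1$ (otherwise the claim is trivially satisfied by the $[\,\cdot\,]_+$ convention). Taking the second largest over $m$ (the worst single index being absorbed into the ``second largest'' operation) and then the max over $x^n\in{\cal X}^n$ yields the claimed bound on $\delta_{C}(D_{\tilde{\phi}_n})$. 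There is no serious obstacle: the only subtlety is justifying the reduction to classical independent variables, which follows from simultaneous diagonalizability of the summands on disjoint tensor factors together with the product form of $W^{(n)}_{x^n}$; the remaining work is purely Chebyshev.
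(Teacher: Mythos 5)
Your proposal is correct and essentially follows the paper's own argument: both reduce $\Tr[W^{(n)}_{x^n}\Pi_{\tilde\phi_n(m)}]$ to a classical Chebyshev bound on a sum of $n$ independent mean-shifted random variables via conditions \eqref{CS1}--\eqref{CS3}, and both use the triangle inequality on Hamming distance together with \eqref{E41} to guarantee that for any $x^n$ all but one message satisfies $d_H(x^n,\tilde\phi_n(m))>n\eps_2/2$. The only cosmetic difference is that you phrase this last step as a pigeon-hole (``at most one bad $m$'') whereas the paper phrases it pairwise (``for any $m\neq m'$, the minimum of the two trace values is small''), but the two formulations of the second-largest bound are logically equivalent.
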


\noindent {\bf Step 1}: The aim of this step is preparation of lemmas related to random coding.

\noindent To show Theorem \ref{TH3},
we assume that the variable $\Phi_n(m)$ for 
$m \in \overline{{\cal M}}_n$
is subject to the distribution $P^n$ independently.
Then, we have the following four lemmas, which are shown later.
In this proof, we treat the code $\Phi_n$ as a random variable.
Hence, the expectation and the probability for this variable
are denoted by $\mathbb{E}_{\Phi_n} $ and ${\rm Pr}_{\Phi_n}$, respectively.
{We prepare the following lemmas whose proofs are given in Appendices.}
\begin{lem}\Label{LL10}
\if0
When
\begin{align}
I(X;Y)_P >R_1,\Label{C4LL}
\end{align}
\fi
We have the average version of Verifiable condition (a), i.e., 
\begin{align}
\lim_{n \to \infty}
\mathbb{E}_{\Phi_n} 
\sum_{m=1}^{\overline{\sM}_n}
\frac{1}{\overline{\sM}_n} 
\eps_{A,m}(\Phi_n,D_{\Phi_n}) 
=0
 \Label{ER1}.
\end{align}
\hfill $\square$
\end{lem}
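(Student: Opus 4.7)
The plan is to establish the stronger, non-random statement that the verifiable error is bounded uniformly over every choice of codebook:
\begin{align}
\eps_{A,m}(\phi_n(m), D_{\phi_n}) \le \frac{\zeta_2}{n\eps_1^2}
\end{align}
for every encoder $\phi_n$ and every $m \in \overline{{\cal M}}_n$. The averaged bound \eqref{ER1} and its limit as $n\to\infty$ then follow trivially, and the expectation over the random codebook $\Phi_n$ plays no essential role.

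First I would fix a codeword $x^n=(x_1^n,\ldots,x_n^n)=\phi_n(m)$ and examine
$\eps_{A,m}(\phi_n(m),D_{\phi_n})=\Tr[W_{x^n}^{(n)}(I-\Pi_{x^n})]$, where by Definition \ref{Def1} the projector is $\Pi_{x^n}=\{\Xi_{x^n}^{(n)}\ge -n\eps_1\}$. The crucial observation is that the summands $I^{\otimes(i-1)}\otimes\Xi_{x_i^n}\otimes I^{\otimes(n-i)}$ act on disjoint tensor factors and therefore commute pairwise, so $\Xi_{x^n}^{(n)}$ can be simultaneously diagonalised together with the product state $W_{x^n}^{(n)}=W_{x_1^n}\otimes\cdots\otimes W_{x_n^n}$. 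Measuring $\Xi_{x^n}^{(n)}$ on $W_{x^n}^{(n)}$ then produces the outcome $S_n:=\sum_{i=1}^n \Lambda_i$, where the $\Lambda_i$ are independent classical random variables, with $\Lambda_i$ being the outcome of measuring $\Xi_{x_i^n}$ on $W_{x_i^n}$.

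Conditions \eqref{CS1} and \eqref{CS3} immediately give $\mathbb{E}[\Lambda_i]=\Tr[W_{x_i^n}\Xi_{x_i^n}]=0$ and $\mathrm{Var}[\Lambda_i]=\Tr[W_{x_i^n}\Xi_{x_i^n}^2]\le \zeta_2$. By Chebyshev's inequality,
\begin{align}
\Tr[W_{x^n}^{(n)}(I-\Pi_{x^n})]
=\Pr\{S_n<-n\eps_1\}
\le \Pr\{|S_n|\ge n\eps_1\}
\le \frac{n\zeta_2}{(n\eps_1)^2}
=\frac{\zeta_2}{n\eps_1^2},
\end{align}
which is the desired uniform bound. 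Taking the expectation over $\Phi_n$ and averaging over $m$ preserves this bound, so the left-hand side of \eqref{ER1} is at most $\zeta_2/(n\eps_1^2)$, which tends to $0$.

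I do not anticipate any real obstacle in this step; the construction of the operators $\{\Xi_x\}$ satisfying \eqref{CS1}--\eqref{CS3} is already supplied by Lemma \ref{LS3}, and because the summands in $\Xi_{x^n}^{(n)}$ commute, the argument reduces to a one-line classical Chebyshev estimate. The genuine difficulties in proving Theorem \ref{TH3} will instead appear in the concealing bound (b$\alpha$) (which requires a R\'enyi-divergence estimate on the averaged state) and in the binding bound (c) (where one must combine Lemma \ref{LL12} with a random-coding argument showing that, with high probability over $\Phi_n$, most codewords are pairwise separated by Hamming distance exceeding $n\eps_2$, and then expurgate to keep only such messages, reducing $\overline{\sM}_n$ to $\tilde{\sM}_n=\tfrac{2}{3}\overline{\sM}_n$).
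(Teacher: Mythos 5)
Your proof is correct and follows essentially the same approach as the paper: reduce the quantum trace to a classical probability by diagonalising $\Xi_{x^n}^{(n)}$, identify the measurement outcome with a sum $S_n$ of independent classical random variables having mean $0$ (from \eqref{CS1}) and variance at most $\zeta_2$ (from \eqref{CS3} with $x'=x$), and apply Chebyshev. Your version is mildly stronger, since it yields the uniform bound $\zeta_2/(n\eps_1^2)$ for every fixed codeword rather than only in expectation over $\Phi_n$, but the mechanism is identical. One small imprecision worth fixing: $\Xi_{x^n}^{(n)}$ cannot in general be ``simultaneously diagonalised together with the product state $W_{x^n}^{(n)}$'', because the single-letter operators $\Xi_{x_i}$ and $W_{x_i}$ need not commute; what you actually use, and what is correct, is the weaker fact that measuring $\Xi_{x^n}^{(n)}$ on the product state $W_{x^n}^{(n)}$ produces independent single-letter outcomes $\Lambda_i$ because each summand acts nontrivially on a distinct tensor factor, so the clause about simultaneous diagonalisation can simply be deleted.
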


\begin{lem}[\protect{\cite[Lemma 12]{H2021}}]\Label{LL11B}
When $R_1+R_2 <H(X)_P$,
for $\eps_2>0$,
we have
\begin{align}
\lim_{n \to \infty}
\mathbb{E}_{\Phi_n} 
\sum_{m=1}^{\overline{\sM}_n}
\frac{1}{\overline{\sM}_n} 
\eta_{\Phi_n,\eps_2}^C(m) =0
\Label{ER3}.
\end{align}
\hfill $\square$
\end{lem}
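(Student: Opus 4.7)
My plan is to execute a standard random coding argument refined by a typicality split. By symmetry of the i.i.d.\ sampling of $\Phi_n(1),\ldots,\Phi_n(\overline{\sM}_n)$ from $P^n$, the expected fraction of bad codewords equals
\[
\Pr\!\big\{\exists\, j\neq 1:\ d_H(\Phi_n(1),\Phi_n(j))\le n\eps_2\big\},
\]
so it suffices to show that a single random codeword is very unlikely to have another codeword within its Hamming ball of radius $n\eps_2$.

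First I would fix a small $\delta>0$ and decompose the event according to whether $\Phi_n(1)$ is $\delta$-typical (empirical distribution within total variation $\delta$ of $P$). The atypical contribution is exponentially small in $n$ by Sanov's theorem. For the typical contribution I condition on $\Phi_n(1)=x^n$ and apply a union bound over the other $\overline{\sM}_n-1$ codewords, which reduces the task to estimating $\Pr\{d_H(x^n,Y^n)\le n\eps_2\}$ for an independent draw $Y^n\sim P^n$. Writing this probability as $\sum_{y^n:\, d_H(x^n,y^n)\le n\eps_2} P^n(y^n)$, I would bound the number of $y^n$ in the Hamming ball by $\binom{n}{\lfloor n\eps_2\rfloor}(|{\cal X}|-1)^{\lfloor n\eps_2\rfloor}\le 2^{n(h(\eps_2)+\eps_2\log(|{\cal X}|-1))}$, with $h$ the binary entropy, and bound $P^n(y^n)$ uniformly by observing that any such $y^n$ differs from the typical $x^n$ in at most $n\eps_2$ positions, so its empirical distribution stays within $\delta+\eps_2$ of $P$, forcing $P^n(y^n)\le 2^{-n(H(X)_P-\kappa(\delta,\eps_2))}$ for some continuous $\kappa$ with $\kappa(0,0)=0$.

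Multiplying these bounds gives a single-pair probability of order $2^{-n(H(X)_P-\alpha(\delta,\eps_2))}$ with $\alpha\to 0$ as $(\delta,\eps_2)\to 0$, so the union bound contributes at most $\overline{\sM}_n\cdot 2^{-n(H(X)_P-\alpha)}=2^{n(R_1+R_2-H(X)_P+\alpha)+O(1)}$. Using the strict slack $H(X)_P-(R_1+R_2)>0$, I would pick $\delta$ and $\eps_2$ small enough that $\alpha(\delta,\eps_2)$ lies strictly below this slack, making the typical contribution vanish, and combining with the exponentially small atypicality probability yields the required limit. The delicate point is purely one of calibration: $\eps_2$ must be driven small enough for $\alpha(\delta,\eps_2)$ to undercut the rate gap $H(X)_P-(R_1+R_2)$, yet kept strictly positive so that the Hamming-separation threshold consumed downstream in Lemma~\ref{LS3} remains meaningful. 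This balancing is routine because $\alpha$ is continuous and vanishes at the origin, but it is the precise place where the hypothesis $R_1+R_2<H(X)_P$ is used.
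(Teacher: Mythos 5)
The paper does not reprove this lemma; it imports it verbatim from \cite[Lemma 12]{H2021}, so there is no in-paper proof for me to hold yours against. Judged on its own merits, your argument is correct, and it is the natural method-of-types argument one would expect behind the cited lemma. Your symmetry reduction to a single-index event is valid since the $\Phi_n(m)$ are i.i.d., the Sanov bound disposes of the atypical codeword, and the crucial step — that a typicality split is genuinely needed, because a naive collision/large-deviations bound on $\Pr\{d_H(X^n,Y^n)\le n\eps_2\}$ only reaches the R\'enyi-2 entropy $-\log\sum_x P(x)^2$, not $H(X)_P$ — is correctly exploited: conditioning on a type within $\delta$ of $P$ pins $P^n(y^n)$ near $2^{-nH(X)_P}$ for every $y^n$ in the small Hamming ball, which is exactly what buys the Shannon-entropy rate.

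Two small imprecisions worth tidying but not gaps: (i) the Hamming-ball cardinality is the full sum $\sum_{k\le\lfloor n\eps_2\rfloor}\binom{n}{k}(|{\cal X}|-1)^k$, not just the top term; for $\eps_2\le 1/2$ this is still $\le (n+1)\,2^{n(h(\eps_2)+\eps_2\log(|{\cal X}|-1))}$ and the polynomial factor is harmless. (ii) At the end you refer the Hamming-threshold constraint downstream to Lemma~\ref{LS3}; the place it is actually consumed is Lemma~\ref{LL12} via the quantity $[\zeta_1\eps_2/2-\eps_1]_+$. Finally, you read ``for $\eps_2>0$'' as ``for sufficiently small $\eps_2>0$,'' which is the only sensible reading (for $\eps_2\ge 1$ the claim is false since every pair of length-$n$ strings is within distance $n$), and your calibration remark correctly identifies this as the one place the slack $H(X)_P-(R_1+R_2)>0$ is actually used.
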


\begin{lem}\Label{LL19}
We choose $\sigma_{P,\alpha} \in {\cal S}({\cal H}_Y)$ as
\begin{align}
\sigma_{P,\alpha}:=  \argmin_{\sigma \in {\cal S}({\cal H}_Y)} \tilde{D}_\alpha( \bW\times P\|  \sigma \otimes P).
\Label{LL19E}
\end{align}
We have
\begin{align}
\mathbb{E}_{\Phi_n} 
\sum_{i=1}^{\overline{\sM}_n} \frac{1}{\overline{\sM}_n} 2^{(\alpha-1)\tilde{D}_\alpha(W_{\Phi_n(i)}\|\sigma_{P,\alpha}^n)}
=2^{n (\alpha-1) \tilde{I}_\alpha(X;Y)_P}
\Label{HR7}.
\end{align}
\hfill $\square$
\end{lem}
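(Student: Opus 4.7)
The plan is to show that the inner expectation reduces to a single-letter calculation that exactly matches the definition of $\tilde{I}_\alpha(X;Y)_P$. The two key ingredients are (i) additivity of the sandwiched Rényi divergence on tensor-product states and (ii) the explicit block-diagonal form of $\tilde{D}_\alpha$ for the cq-state $\bW\times P$, whose minimum over the second argument is attained at $\sigma_{P,\alpha}$ by definition \eqref{LL19E}.

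First I would observe that the summand depends only on $\Phi_n(i)$, and each $\Phi_n(i)$ is distributed as $P^n$ independently. Hence
\begin{align}
\mathbb{E}_{\Phi_n}\sum_{i=1}^{\overline{\sM}_n}\frac{1}{\overline{\sM}_n}
2^{(\alpha-1)\tilde{D}_\alpha(W_{\Phi_n(i)}\|\sigma_{P,\alpha}^n)}
=\mathbb{E}_{x^n\sim P^n}\,
2^{(\alpha-1)\tilde{D}_\alpha(W_{x^n}\|\sigma_{P,\alpha}^{\otimes n})},
\end{align}
so the averaging over $i$ plays no nontrivial role. Next I would use the tensorization identity $\tilde{D}_\alpha(\rho_1\otimes\rho_2\|\sigma_1\otimes\sigma_2)=\tilde{D}_\alpha(\rho_1\|\sigma_1)+\tilde{D}_\alpha(\rho_2\|\sigma_2)$ applied to $W_{x^n}=W_{x_1}\otimes\cdots\otimes W_{x_n}$ and $\sigma_{P,\alpha}^{\otimes n}$; after exponentiating and taking expectation over the i.i.d.\ coordinates $x_k\sim P$, this turns the expression into an $n$-th power:
\begin{align}
\mathbb{E}_{x^n\sim P^n}\,2^{(\alpha-1)\tilde{D}_\alpha(W_{x^n}\|\sigma_{P,\alpha}^{\otimes n})}
=\left(\sum_{x\in\mathcal{X}}P(x)\,2^{(\alpha-1)\tilde{D}_\alpha(W_x\|\sigma_{P,\alpha})}\right)^{\!n}.
\end{align}

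Finally I would identify the bracketed sum with $2^{(\alpha-1)\tilde{I}_\alpha(X;Y)_P}$. For this, I would note that because $\bW\times P$ and $P\otimes\sigma$ are both block-diagonal in the classical register $X$, a short computation gives
\begin{align}
2^{(\alpha-1)\tilde{D}_\alpha(\bW\times P\,\|\,P\otimes\sigma)}
=\sum_{x\in\mathcal{X}}P(x)\,2^{(\alpha-1)\tilde{D}_\alpha(W_x\|\sigma)}.
\end{align}
Substituting $\sigma=\sigma_{P,\alpha}$ and invoking the definition of $\tilde{I}_\alpha(X;Y)_P$ together with the optimizer in \eqref{LL19E} gives the bracketed sum exactly as $2^{(\alpha-1)\tilde{I}_\alpha(X;Y)_P}$, and the $n$-th power produces the claimed right-hand side.

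The proof is essentially routine; the only point that needs any care is the block-diagonal computation of $\tilde{D}_\alpha(\bW\times P\|P\otimes\sigma)$, which requires writing out $(P\otimes\sigma)^{-(\alpha-1)/(2\alpha)}$ on each $X$-block and commuting it with the classical projectors. All the other steps (additivity on tensor products, pulling the expectation inside a product of independent terms, and interpreting the single-letter sum as a Rényi mutual information) are standard.
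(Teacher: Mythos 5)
Your argument is correct and follows essentially the same route as the paper's Appendix proof: additivity of $\tilde D_\alpha$ under tensor products, averaging each i.i.d.\ coordinate of $\Phi_n(i)\sim P^n$ to get the $n$-th power of $\sum_x P(x)\,2^{(\alpha-1)\tilde D_\alpha(W_x\|\sigma_{P,\alpha})}$, and then recognizing this single-letter sum as $2^{(\alpha-1)\tilde D_\alpha(\bW\times P\|\,\sigma_{P,\alpha}\otimes P)}=2^{(\alpha-1)\tilde I_\alpha(X;Y)_P}$ by the block-diagonal structure of the cq-state together with the choice \eqref{LL19E}. The only presentational difference is that you fold the $\tfrac{1}{\overline{\sM}_n}\sum_i$ away up front, whereas the paper carries that trivial average through the chain of equalities.
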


\noindent {\bf Step 2}: 
The aim of this step is the extraction of an encoder $\phi_n$ and messages $m$ 
with a small decoding error probability 
that satisfies the condition \eqref{CC2}.

We define $\eps_{3,n}$ as
\begin{align}
 \eps_{3,n}:= 
9 \mathbb{E}_{\Phi_n}\sum_{m=1}^{\overline{\sM}_n}
\frac{1}{\overline{\sM}_n} 
\Big(
\Big(\eps_{A,m}(\phi_n,D_{\Phi_n}) 
+\eta_{\Phi_n,\eps_2}^C(m) 
\Big)
\Big).
\end{align}
Here the function $\eta_{\Phi_n,\eps_2}^C$ reflects 
the Hamming distance condition \eqref{CC2}.
Lemmas \ref{LL10} and \ref{LL11B} 
guarantees that $\eps_{3,n} \to 0$.
Then, there exists a sequence of codes $\phi_n$ such that
\begin{align}
\sum_{m=1}^{\overline{\sM}_n}
\frac{1}{\overline{\sM}_n} 
\Big(\eps_{A,m}(\phi_n,D_{\phi_n}) 
+\eta_{\phi_n,\eps_2}^C(m) 
\Big)
& \le \frac{\eps_{3,n}}{3} \Label{BGC} \\
\sum_{m=1}^{\overline{\sM}_n} \frac{1}{\overline{\sM}_n} 
2^{(\alpha-1)\tilde{D}_\alpha(W_{\phi_n(m)}\|\sigma_{P,\alpha}^n)}
& \le 3\cdot 2^{n (\alpha-1) \tilde{I}_\alpha(X;Y)_P}
\Label{HR7F}.
\end{align}
Due to Eq. \eqref{BGC}, Markov inequality guarantees that
there exist $\frac{2}{3} \cdot \overline{\sM}_n$ elements 
$\tilde{{\cal M}}_n:=
\{m_1, \ldots, m_{\frac{2}{3} \cdot  \overline{\sM}_n}\}$
such that every element $m \in \tilde{{\cal M}}_n$ satisfies
\begin{align}
\eps_{A,m}(\phi_n,D_{\phi_n}) 
+\eta_{\phi_n,\eps_2}^C(m) 
\le \eps_{3,n},
\end{align}
which implies that
\begin{align}
\eps_{A,m}(\phi_n,D_{\phi_n}) &\le \eps_{3,n}  \Label{NAC1}\\
\eta_{\phi_n,\eps_2}^C(m)& =0 \Label{NAC}
\end{align}
because $\eta_{\phi_n,\eps_2}^C$ takes value 0 or 1.
Then, we define a code $\tilde{\phi}_n$
on $\tilde{{\cal M}}_n$
as $\tilde{\phi}_n(m):= {\phi}_n(m)$ for $m \in \tilde{{\cal M}}_n $.
Eq. \eqref{NAC1} guarantees Verifiable condition (a).
For $m,m'$, 
Eq. \eqref{HR7F} guarantees that
\begin{align}
\sum_{m\in \tilde{\cal M}_n} \frac{1}{|\tilde{\cal M}_n|} 
2^{(\alpha-1)\tilde{D}_\alpha(W_{\tilde{\phi}_n(m)}\|\sigma_{P,\alpha}^{\otimes n})}
& =
\sum_{m\in \tilde{\cal M}_n} 
\frac{3}{2\overline{\sM}_n} 2^{(\alpha-1)\tilde{D}_\alpha(W_{\phi_n(m)}\|\sigma_{P,\alpha}^{\otimes n})}
\le \frac{9}{2}\cdot 2^{n (\alpha-1) \tilde{I}_\alpha(X;Y)_P}\Label{ZSO}.
\end{align}

\noindent {\bf Step 3}: 
The aim of this step is the evaluation of the parameter $ \delta_{C}(D_{\tilde{\phi}_n,3}) $.

\noindent 
The relation \eqref{NAC} guarantees the condition
\begin{align}
d_H(\tilde{\phi}_n(m),\tilde{\phi}_n(m')) > n \eps_2
\Label{MUFV}
\end{align}
for $ m \neq m'\in \tilde{{\cal M}}_n$.
Therefore, Lemma \ref{LL12} guarantees 
Binding condition (c), i.e., 
\begin{align}
&
 \delta_{C}(D_{\tilde{\phi}_n}) 
\le
\frac{ \zeta_2}{{n} [\zeta_1\frac{\eps_2}{2}- \eps_1 ]_+^2  } \to 0.
\Label{EF1}
\end{align}

\noindent {\bf Step 4}: The aim of this step is the evaluation of the parameter $E_\alpha( \tilde{\phi}_n)$.

Eq. \eqref{ZSO} guarantees that
\begin{align}
&  \min_{\sigma_n \in {\cal S}({\cal H}_Y^{\otimes n})} \sum_{m\in \tilde{\cal M}_n} 
\frac{1}{|\tilde{{\cal M}}_n|}
2^{(\alpha-1)\tilde{D}_\alpha(W_{\tilde{\phi}_n(m)}\|\sigma_n)}\nonumber \\
\le & \sum_{m\in \tilde{\cal M}_n} 
\frac{1}{|\tilde{{\cal M}}_n|}
2^{(\alpha-1)\tilde{D}_\alpha(W_{\tilde{\phi}_n(m)}\|\sigma_{P,\alpha}^{\otimes n})}\nonumber \\
\stackrel{(a)}{\le}&
 \frac{9}{2}\cdot 2^{n (\alpha-1) \tilde{I}_\alpha(X;Y)_P}, 
\end{align}
where $(a)$ follows from \eqref{ZSO}.
Hence, we obtain Condition (b$\alpha$), i.e., 
the relation $\lim_{n\to \infty}\frac{1}{n}E_{\alpha}(\phi_n)\ge r_\alpha$ with \eqref{CMP} as
\begin{align}
 \lim_{n \to \infty} 
\frac{1}{n}E_\alpha( \tilde{\phi}_n) 
\ge  R_1+R_2- \tilde{I}_\alpha(X;Y)_P .
\end{align}
\endproof

\section{Conclusion}\Label{S7}
We have calculated various types of commitment capacities.
To show the direct part, we have extended the method by \cite{H2021}
to the quantum setting.
To show the converse part, we have shown Proposition \ref{Fanos},
which constructs a function to estimate the message from 
the random variables $X^n,U^n,V^n$.
This function has been constructed from an invertible protocol,
and satisfies the required property \eqref{XZP}
due to the security parameters of the original invertible protocol.
This part was omitted in the preceding papers \cite{BC1,CCDM,W-Protocols}.
Since any interactive protocol in the classical setting satisfies the invertible condition, our converse proof covers the classical setting without any condition.

However, we could not prove the converse part for a general interactive protocol in the cq-channel setting.
{When the invertible condition does not hold,} there exists no inverse TP-CP map
$\Lambda_{Y'_1V_1 \to U_1 Y_1}$, 
$\Lambda_{Y'_2V_2 \to Y'_1 U_2 Y_2 }$,
$\ldots,\Lambda_{Y'_nV_n \to Y'_{n-1} U_n Y_n }$ 
to satisfy the condition \eqref{CCA}.
Hence, the relation \eqref{XL4} does not hold in general.
We need to find another method to avoid this problem
{for a general interactive protocol}.
Therefore, it is a interesting future problem to calculate 
the capacities $C_p(\bW) $ and $ C_a(\bW)$.

In the direct part, 
we have constructed a specific code to satisfy Conditions (A), (B), 
and (C), and have converted it to a non-interactive protocol to achieve the  commitment capacity.
For this construction, we have constructed a pre-code to satisfy Conditions (a), (b$\alpha$), and (c) by using Hamming distance as Theorem \ref{TH3}.
However, we have not constructed a special type of list decoding 
unlike the reference \cite{H2021} due to the following reason.
If we apply the same list decoder, we need to apply a measurement, which 
might destroy the received quantum state.
Therefore, we can expect that this approach does not work well for cq-channels.
It is another interesting future direction to construct secure list decoding 
for a cq-channel
that has a similar performance as that in the reference \cite{H2021}.

\section*{Acknowledgments}
MH was supported in part by the National Natural Science Foundation of China (Grant No. 62171212) and
Guangdong Provincial Key Laboratory (Grant No. 2019B121203002).

\appendices

\section{Proof of Lemma \ref{LL12}}\Label{S7-C}
\noindent {\bf Step 1}: 
The aim of this step is the evaluation of $W^n_{x^n} (\Pi_{{x^n}',3})$.

\noindent 
The conditions \eqref{CS1} and \eqref{CS2} imply that 
\begin{align}
\Tr [ W_{{x^n}'}^{(n)} 
\Xi_{x^n}]
& \le -\zeta_1 d(x^n,{x^n}').
\end{align}
By using the method by \cite{H2002},
the condition \eqref{CS3} implies that 
\begin{align}
\Tr [ W_{{x^n}'}^{(n)} 
(\Xi_{x^n}^{(n)}-\Tr [ W_{{x^n}'}^{(n)} \Xi_{x^n}^{(n)}])^2]
& \le n \zeta_2 .
\end{align}
Hence, applying Chebyshev inequality to 
the variable $\xi_{x^n} (Y^n) $, 
we have
\begin{align}
W^n_{{x^n}'} (\Pi_{{x^n,2}})
=&
\Tr[ W^n_{{x^n}'} 
\{\Xi_{x^n}^{(n)}\ge -n\eps_1 \}]
\nonumber \\
\le &
\frac{n \zeta_2}{[
\zeta_1 d(x^n,{x^n}')-n \eps_1 ]_+^2  }\Label{Che}.
\end{align}

\noindent {\bf Step 2}: The aim of this step is the evaluation of smaller value of 
$\Tr[ W^n_{{x^n}} \Pi_{{\tilde{\phi}_n(m),3}}]$
and 
$\Tr [W^n_{{x^n}} \Pi_{{\tilde{\phi}_n(m'),3}}]$.
Since Eq. \eqref{E41} implies
\begin{align}
n \eps_2< d(\tilde{\phi}_n(m),\tilde{\phi}_n(m')) \le 
d_H(x^n,\tilde{\phi}_n(m)) + d_H(x^n,\tilde{\phi}_n(m')) ,
\end{align}
we have
\begin{align}
\max ([\zeta_1 d_H(x^n,\tilde{\phi}_n(m)) - n \eps_1 ]_+,
[\zeta_1 d_H(x^n,\tilde{\phi}_n(m')) - n \eps_1 ]_+)
\ge [n(\zeta_1\frac{\eps_2}{2}- \eps_1) ]_+^2 .
\end{align}
Hence, \eqref{Che} guarantees that
\begin{align}
&\min(
\Tr [W^n_{{x^n}} \Pi_{{\tilde{\phi}_n(m),3}} ],
\Tr [W^n_{{x^n}} \Pi_{{\tilde{\phi}_n(m'),3}}]
) \nonumber \\
\le &
\frac{n \zeta_2}{
\max ([\zeta_1 d(x^n,\tilde{\phi}_n(m)) - n \eps_1 ]_+^2,
[\zeta_1 d(x^n,\tilde{\phi}_n(m')) - n \eps_1 ]_+^2)} \nonumber \\
\le &
\frac{ n \zeta_2}{ [n(\zeta_1\frac{\eps_2}{2}- \eps_1) ]_+^2  } 
=\frac{ \zeta_2}{{n} [\zeta_1\frac{\eps_2}{2}- \eps_1 ]_+^2  } ,
\end{align}
which implies the desired statement.
\endproof

\section{Proof of Lemma \ref{LL10}}
To evaluate the value
\begin{align}
\mathbb{E}_{X^n} \Tr W_{X^n}^{(n)}(I- \Pi_{X^n}) 
=\mathbb{E}_{X^n} \Tr W_{X^n}^{(n)}\{\Xi_{x^n}^{(n)}< -n\eps_1 \},\Label{NAS}
\end{align}
we denote the eigenvalue of $\Xi_x$ with eigenvector 
$|e_{j,x}\rangle$ by $\xi(j,x)$.
We define the random variable $J,X$
whose joint distribution is 
$P_{JX}(jx)=P(x) \langle e_{j,x}|W_x|e_{j,x}\rangle$.
We consider their $n$ independent variables
$J^n=(J_1, \ldots, J_n)$
and $X^n=(X_1, \ldots, X_n)$. 
Hence, we define 
$ \xi^n(J^n,X^n):=\sum_{i=1}^n \xi(J_i,X_i)$.
The value \eqref{NAS}
equals the probability 
${\rm Pr }( \xi^n(J^n,X^n)  \le - n \eps_1 )$.
Since the expectation of $\xi(J_i,X_i)$ is zero
and 
the variance of $\xi(J_i,X_i)$ is upper bounded by $\zeta_2$,
this value goes to zero.
Hence, we obtain Lemma \ref{LL10}.

\section{Proof of Lemma \ref{LL19}}\Label{S7-E9}
Eq. \eqref{HR7} can be shown as follows.
\begin{align}
& \mathbb{E}_{\Phi} 
\sum_{i=1}^{\overline{\sM}_n} \frac{1}{\overline{\sM}_n} 
2^{(\alpha-1)\tilde{D}_\alpha(W_{\Phi_n(i)}\|\sigma_{P,\alpha}^
{\otimes n})}
=
\mathbb{E}_{\Phi} 
\sum_{i=1}^{\overline{\sM}_n} \frac{1}{\overline{\sM}_n} 
\prod_{j=1}^n
2^{(\alpha-1)\tilde{D}_\alpha(W_{\Phi_n(i)_j}\|\sigma_{P,\alpha})}
 \nonumber \\
=&
\sum_{i=1}^{\overline{\sM}_n} \frac{1}{\overline{\sM}_n} 
\prod_{j=1}^n
\sum_{x\in {\cal X}}P(x)
2^{(\alpha-1)\tilde{D}_\alpha(W_{x}\|\sigma_{P,\alpha})}
\nonumber \\
=&
\sum_{i=1}^{\overline{\sM}_n} \frac{1}{\overline{\sM}_n} 
\prod_{j=1}^n
2^{(\alpha-1) \tilde{D}_\alpha(\bW\times P\|  
\sigma_{P,\alpha} \otimes P)}
\nonumber \\
\stackrel{(a)}{=}&
\sum_{i=1}^{\sM_n} \frac{1}{\overline{\sM}_n} 
\prod_{j=1}^n
2^{(\alpha-1) \tilde{I}_\alpha(X;Y)_P}
=
\sum_{i=1}^{\overline{\sM}_n} \frac{1}{\overline{\sM}_n} 
2^{n (\alpha-1) \tilde{I}_\alpha(X;Y)_P}
=
2^{n (\alpha-1) \tilde{I}_\alpha(X;Y)_P}
\Label{HR7LL},
\end{align}
where $(a)$ follows from \eqref{LL19E}.
\endproof

\bibliographystyle{IEEE}

\end{document}